\documentclass[journal,twoside]{IEEEtran}
\usepackage{pifont}
\usepackage{textcomp}
\usepackage{stfloats}
\usepackage{subfloat}
\usepackage{url}
\usepackage{verbatim}
\usepackage{cite}
\usepackage{amsmath}

\usepackage{multirow}
\usepackage{amsmath,amssymb,amsfonts}
\usepackage{algorithmicx,algorithm}
\usepackage[noend]{algpseudocode}
\usepackage{textcomp}
\usepackage{xcolor}
\usepackage{caption}
\usepackage{array}
\usepackage{stfloats}
\usepackage{bm}
\usepackage{harpoon}
\usepackage{extarrows}
\usepackage{booktabs}
\usepackage{graphicx}
\usepackage{hyperref}
\usepackage{bbm}
\usepackage{subfig}
\newtheorem{lemma}{Lemma}

\def\BibTeX{{\rm B\kern-.05em{\sc i\kern-.025em b}\kern-.08em
    T\kern-.1667em\lower.7ex\hbox{E}\kern-.125emX}}
\usepackage{balance}

\newcommand{\cmark}{\ding{51}} 
\newcommand{\xmark}{\ding{55}} 

\begin{document}
\title{{{Multi-Stream Spatiotemporal Channel Coding for \\
			 MIMO Systems: Transmission Scheme Design \\ and
			  Achievable Rate Optimization}}}
\author{{Liang~Jin,~Xiaodong~Xu,~\IEEEmembership{Senior~Member,~IEEE,}~Shujun~Han,~\IEEEmembership{Member,~IEEE,} 
		
		Xiaoyu~Chi,~Ping~Zhang,~\IEEEmembership{Fellow,~IEEE},~and~Chau~Yuen,~\IEEEmembership{Fellow,~IEEE}} 
\thanks{This work is supported in part by the National Key R$\&$D Program of China No. 2020YFB1806905; in part by the National Natural Science Foundation of China No. 62201079; and in part by MOE (Ministry of Education, Singapore), under MOE Tier 2 Award number T2EP50124-0032. {\textit {(Corresponding author: Xiaodong Xu)}}.}

\thanks{Liang Jin, Xiaoyu Chi and  Ping Zhang are with the State Key Laboratory of Networking and Switching Technology, Beijing University of Posts and Telecommunications, Beijing, 100876, China (e-mail: jinliang@bupt.edu.cn;  xiaoyu\_chi@bupt.edu.cn; pzhang@bupt.edu.cn).}

\thanks{Xiaodong Xu is with the State Key Laboratory of Networking and Switching Technology, Beijing University of Posts and Telecommunications, Beijing, 100876, China, and also with the Department of Broadband Communication, Peng Cheng Laboratory, Shenzhen 518066, Guangdong, China (e-mail: xuxiaodong@bupt.edu.cn; ).}

\thanks{Shujun Han is with the National Engineering Research Center for Mobile
	Network Technologies, Beijing University of Posts and Telecommunications,
	Beijing, 100876, China (e-mail: hanshujun@bupt.edu.cn).}

\thanks{Chau Yuen is with the School of Electrical and Electronics Engineering, Nanyang Technological University, Singapore 639798 (e-mail: chau.yuen@ntu.edu.sg).}

}

\markboth{IEEE latex,~Vol.~, No.~, July~2026}%
{IEEE latex,~Vol.~, No.~, July~2026}

\maketitle

\begin{abstract}
	Spatiotemporal channel coding (STCC) can improve the achievable rate over traditional temporal channel coding (TCC) by leveraging spatial degrees of freedom to extend the codeword length. Although several information-theoretic foundations on STCC have been established, the investigation of transmission schemes from a communication-theoretic perspective remains in its early stages.
	{	This paper proposes a multi-stream over multi-subchannel STCC (STCC-MSC) under full channel state information assumption and optimizes its achievable rate in the finite blocklength regime.} We first formulate the transmission architecture of STCC-MSC in a point-to-point MIMO system, which introduces a stream-subchannel matching mechanism. We then maximize the achievable rate of STCC-MSC by jointly optimizing the subchannel assignment and power allocation strategies, which is formulated as a mixed-integer-nonlinear-programming problem. Next, a  penalized alternating convex approximation (PACA) algorithm is proposed to  solve this problem. Subsequently, we extend the point-to-point STCC-MSC designs to the more general multi-user MIMO systems, including both uplink and downlink scenarios. Finally, simulation results indicate that the PACA algorithm achieves a 9.85\% rate improvement over the benchmark algorithm within the STCC-MSC scheme. Furthermore, the joint STCC-MSC-PACA scheme improves the achievable rate by 28.68\% over TCC scheme.
	
	\vspace{1em}
{	Accepted for publication in \textit{IEEE Trans. on Wireless Commun.}. The simulation code for reproduction is available at: \url{https://github.com/L-Jin-bupt}}  
\end{abstract}

\begin{IEEEkeywords}
	spatiotemporal channel coding, MIMO, finite blocklength, achievable rate.
\end{IEEEkeywords}

\section{Introduction}
\IEEEPARstart{T}{o} {satisfy the tightening latency requirements in future applications, sixth-generation (6G) mobile communication systems are expected to achieve transmission delay on the order of microseconds \cite{expre,6G1,6G2}. However, due to the “impossible triangle” of blocklength-reliability-rate in communication systems, finite-blocklength communications suffer from significant rate loss because of the short length of each codeword \cite{turbo, Yury, durisi}.  In an effort to cope with the “impossible triangle”, researchers explore the space–blocklength exchangeability in multiple-input multiple-output (MIMO) systems, giving rise to the concept of spatiotemporal channel coding (STCC) \cite{ex}.}

{In traditional temporal channel coding (TCC), independent information streams are encoded in the time domain and transmitted over each subchannel, where the codeword length of each stream equals the number of channel uses. In contrast, STCC jointly encodes information streams across both time and multiple subchannels, effectively trading spatial degrees of freedom (DoF) for increased codeword length per stream \cite{CL}. As a result, despite supporting fewer simultaneously transmittable streams, STCC has been proved to achieve higher rate than TCC with identical time-frequency-space resources in the finite blocklength regime\cite{CL,FY,TWC,TVT}.} {It is worth noting that STCC is fundamentally different from conventional space-time coding (STC) \cite{STC1,STC2}. STC jointly designs transmitted symbols across the spatial and temporal dimensions to improve diversity and multiplexing performance. In contrast, STCC jointly maps channel-coded information onto spatial subchannels over multiple channel uses under the finite-blocklength information-theoretic framework to improve the achievable rate in the finite blocklength regime.}

Information-theoretic research on STCC dates back to 2010 and is now receiving growing attention from the communication community, while still remaining in its early stages. To the best of the authors’ knowledge, the earliest information-theoretic study related to STCC in the finite blocklength regime dates back to Polyanskiy’s dissertation \cite[Ch. 4.5]{FB2}, which analyzes the achievable rate of parallel Gaussian channels. However, he does not explicitly describe this as joint coding over time and space or distinguish it from traditional TCC. The term STCC is formally introduced in 2022 by You \textit{et al.}, who propose a spatiotemporal polar coding scheme based on product code and {show its superiority over TCC in block error rate \cite{polar}.} {Recently, more works on spatiotemporal polar coding have been developed\cite{n1,n2,n3}.} Moreover, the findings in \cite{polar} have been validated by experiments in a cell-free mmWave communication system \cite{expre}. Although \cite{polar} and \cite{expre} show promising results for STCC,  the underlying information-theoretic principles have not yet been fully explored.

{Building upon Polyanskiy's seminal non-asymptotic work in the finite-blocklength regime \cite[Ch. 4.5]{FB2}, some information-theoretic studies on STCC under various channel state information (CSI) assumptions in point-to-point (P2P) MIMO have emerged.} The achievable and converse bounds of STCC under different CSI conditions are investigated in \cite{YW1}, and no closed-form expressions are provided. Based on the results in \cite{YW1}, the average rate over Rayleigh fading channels is derived in \cite{CL} under the assumption that CSI is available at the receiver (CSI-R), while \cite{FY} investigates the achievable rate for a given channel realization under the same CSI assumption.  When CSI is available at both the transmitter (CSI-T) and the receiver, i.e., under full CSI, the transmitter can perform power allocation to better adapt to the channel and thereby achieve a higher rate. To this end, the classical water-filling method \cite[Ch. 4.5]{FB2} and its modified version \cite{np} have been adopted.\footnote{From an information-theoretic perspective, the parallel Gaussian channel effectively models a MIMO system with full CSI, as singular value decomposition can decompose the fading channel into $N_S$ independent Gaussian subchannels, where $N_S$ is the spatial DoF \cite[Ch. 7]{Tse}.}  Additionally, from the perspective of electromagnetic information theory, we provide a physical justification for the STCC design in \cite{TWC}. All the aforementioned works \cite{CL,FY,TWC,FB2,YW1,np} consider STCC with a single transmitted stream, i.e., all subchannels participate in encoding only one stream. { Motivated by this, our prior work \cite{TVT} extends the STCC framework to the multi-stream scenario under the CSI-R assumption and focuses on the achievable rate analysis with linear receivers.} A clearer summary of the above literature is provided in Table \ref{tab:summary}.

{Unlike the CSI-R setting in \cite{TVT}, this paper proposes a multi-stream over multi-subchannel STCC (STCC-MSC) transmission scheme that considers both transmitter-side design and the corresponding receiver processing under the full-CSI assumption, and develops the corresponding achievable rate optimization framework.} Our main contributions are summarized as follows:

\begin{table*}[t]
	\centering
	\caption{Existing Works on Achievable Rate of STCC}
	{
		\begin{tabular}{lccccccc}
			\toprule
			\textbf{Reference}  & \textbf{Closed-form rate expression}& \textbf{CSI-R} & \textbf{CSI-T} &\textbf{Multi-stream}&\textbf{Multi-user} \\
			\midrule
			\cite{YW1} & \xmark & \cmark & \cmark & \xmark & \xmark  \\
			\cite{FY,CL} & \cmark & \cmark & \xmark & \xmark & \xmark  \\
			\cite{np,FB2, TWC} & \cmark & \cmark & \cmark & \xmark& \xmark   \\
			\cite{TVT} & \cmark & \cmark & \xmark & \cmark & \xmark  \\
			This paper & \cmark & \cmark & \cmark & \cmark & \cmark  \\
			\bottomrule
		\end{tabular}
	}
	\label{tab:summary}
\end{table*}

\begin{table}[t]
	\caption{Notations Used in This Paper}
	\label{not}
	\centering
	\begin{tabular}{ll}
		\toprule
		\textbf{Notations} &\textbf{Meanings}\\
		\midrule
		$N_t,N_r$ & Number of antennas in P2P MIMO\\
		$d,i,k$ & Index of streams/subchannels/users\\
		$N_S, N_{S,k}$ & {Spatial DoF}\\
		$D$ &{Number of maximum streams}\\
		{$n$} & {Number of channel uses}\\
		{${K}$} & {Number of users}\\
		$N_B,N_k$ & Number of antennas in multi-user MIMO\\
		$\overline{\mathbf{X}},\mathbf{X}, {\mathbf{x}}_d,\overline{\mathbf{x}}_k$ & {Transmitted codewords}\\
		$\overline{\mathbf{Y}},\mathbf{Y}, {\mathbf{y}}_d,\overline{\mathbf{y}}_k$ & {Received codewords} \\		
		$\mathbf{H}, \overline{\mathbf{H}},\mathbf{H}_k,\overline{\mathbf{H}}_k$ & {Channel matrices}\\ 
		{$\overline{\mathbf{W}},\mathbf{W}, \mathbf{w}_k$} & {AWGN matrices/vector}\\
		$\mathbf{U}$ & {Left singular vectors in SVD}\\
		$\mathbf{V}$ & {Right singular vectors in SVD}\\
		$ \mathbf{\Lambda}$ & {Singular value matrices}\\
		$\lambda_{i}$ & Singular value of the $i$-th subchannel\\
		$\xi_{i}$ & SNR of the $i$-th subchannel\\
		$\bm{\xi_{d}}$ & SNRs of the  subchannels of the $d$-th stream\\
		$\left(\cdot\right)^H,\left(\cdot\right)^T$ & {Hermitian transpose/transpose}\\
		{$\mathbf{M}(i,:)$} & {The \textit{i}-th row of matrix $\mathbf{M}$}\\
		{$\mathbf{M}(:,d)$} & {The \textit{d}-th column of matrix $\mathbf{M}$}\\
		$\mathbf{M}(i,d)$ & Element in the $i$-th row and $d$-th column of $\mathbf{M}$\\
		$\mathbf{m}(b:c)$ &  Subvector of $\mathbf{m}$ from the $b$-th to  $c$-th element\\
		$\varepsilon$ & {Block error probability}\\
		$\tilde{i}(x;y)$ & Information density \\
		{$\mathbf{P},\mathbf{p},\mathbf{p}_k$} & {Transmit power matrix/vectors}\\
		$\sigma^2,\sigma_k^2$& Noise powers\\
		${\rm log}$ & {Binary logarithm} \\
		$\mathcal{CN}$ & {Complex normal distribution}\\
		{$\mathbf{S},\mathbf{S}_k$} & {Subchannel assignment matrices}\\
		$V,V_i$ & {Channel dispersions}\\
		$R_T, R_{ST},R_d,R_k$ & {Achievable rates} \\
		$\mathcal{N_S}, \mathcal{D}, \mathcal{K}$ & {Set of subchannels/streams/users }\\
		$\mathcal{N}_{S,k}, \mathcal{D}_k$ & {Set of subchannels/streams of user $k$}\\
		$Q^{-1}$ & {Inverse Q function}\\
		$\mathbb{R}^{r \times c},\mathbb{C}^{r \times c}$ & {Real/complex matrix with dimension $r \times c$ }\\
		$\mathbf{G},\mathbf{Q}$ & {Auxiliary variables}\\
		$\mathbf{I}_{r}$ & {Identity matrix with dimension $r \times r$ }\\
		$\mathbf{0}_{r \times 1}$ & {Zero vector with dimension $r \times 1$ }\\
		$\epsilon,\epsilon_O$ & Precision threshold\\
		$P,P_D,P_k$ & Maximum transmit powers\\
		{$||\cdot||_2$, $||\cdot||_0$} & {Euclidean norm/$\ell_0$-norm}\\
		$\odot$ & Hadamard product\\
	{	$\{0,1\}^{N_S \times D}$} & {Set of $N_S \times D$-dimension binary matrices}\\
		\bottomrule
	\end{tabular}
\end{table}

\begin{enumerate}
	\item We formulate the transmission architecture of STCC-MSC in a P2P MIMO system. Compared with the conventional TCC scheme, the proposed architecture of STCC-MSC introduces a stream–subchannel matching mechanism, enabling each stream to be transmitted over multiple subchannels and encoded with a longer codeword length.
	
	\item Since the STCC-MSC entails a co-design of stream–subchannel matching and AWGN channel coding, we aim to maximize the achievable rate by jointly optimizing the subchannel assignment and power allocation strategies. To solve this mixed-integer nonlinear programming (MINLP) problem efficiently, we propose a penalized alternating convex approximation (PACA) algorithm.

	\item We extend the P2P STCC-MSC designs to the more general multi-user scenarios. Specifically, by applying the block diagonalization (BD) technique to eliminate inter-user interference, the downlink sum-rate maximization problem is reformulated as a $K$-fold higher-dimensional extension of the P2P case. In contrast, the uplink sum-rate maximization problem decomposes into $K$ parallel and independent sub-problems, each equivalent to a rate maximization problem in the P2P setting. Both uplink and downlink sum-rate maximization problems can be efficiently solved using the proposed PACA algorithm.
	
	\item Simulation results indicate that the PACA algorithm achieves a 9.85\% rate improvement over the benchmark algorithm within the STCC-MSC scheme. Furthermore, the joint STCC-MSC-PACA scheme improves the rate by 28.68\% over the TCC scheme. 
\end{enumerate}

The rest of this paper is organized as follows. In Section \uppercase\expandafter{\romannumeral2}, we introduce some preliminaries, including the interference management in MIMO, as well as the motivation and the design principle of STCC. In Section \uppercase\expandafter{\romannumeral3}, we introduce the  transmission architecture of STCC-MSC and formulate the rate maximization problem. In Section \uppercase\expandafter{\romannumeral4}, the PACA algorithm is introduced. In Section \uppercase\expandafter{\romannumeral5}, we extend the designs to the multi-user MIMO scenario. Simulation results are presented in Section \uppercase\expandafter{\romannumeral6}. Conclusions and discussions are given in Section \uppercase\expandafter{\romannumeral7}. In this paper, we use the boldface upper case letters represent matrices and the boldface lower case letters represent
vectors. Notations used in this paper are listed in Table \ref{not}.

\section{Preliminaries}
This section presents the motivation and the design principle of STCC and clarifies how it differs from TCC. We first briefly review the application of singular value decomposition (SVD) in mitigating spatial multiplexing-induced interference in P2P MIMO systems. Then we present the design principle of traditional TCC and introduce the motivation and design principles of STCC. 

\begin{figure*}[t!]
	\centerline{\includegraphics[width=1\linewidth]{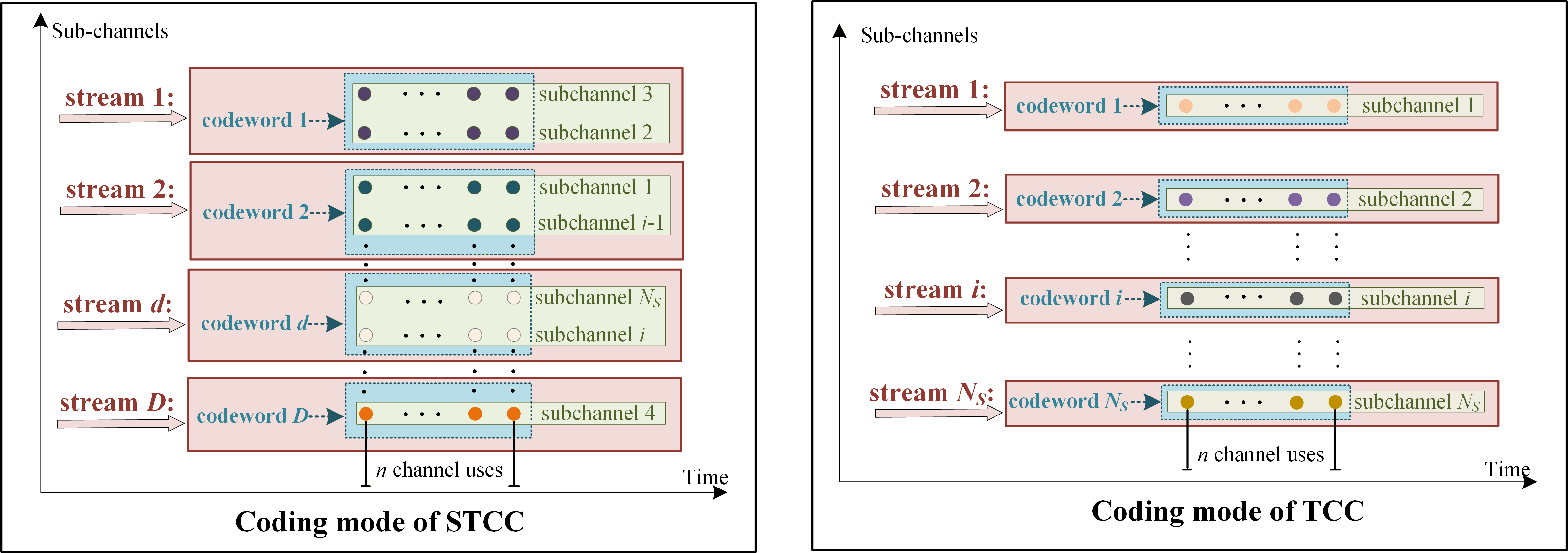}}
\caption{{Coding modes of STCC and TCC for MIMO with multiple streams.}}
	\label{codingmodes}
\end{figure*}

\subsection{Transforming Fading Channel into Parallel Gaussian Channel by Using SVD}
{Consider a P2P MIMO system, where the transmitter and the receiver are equipped with $N_t$ and $N_r$ antennas, respectively.} Since the codewords require to be transmitted within a short blocklength, it is reasonable to consider a quasi-static flat fading MIMO channel, where the channel matrix $\mathbf{H}\in \mathbb{C}^{N_r\times N_t}$ remains invariant throughout the duration of each codeword \cite{FY}. $\overline{\mathbf{X}}\in \mathbb{C}^{N_t \times n}$ and $\overline{\mathbf{Y}}\in \mathbb{C}^{N_r \times n}$ denote the transmitted codeword and received codeword at antennas, {where $n$ is the number of channel uses, i.e., the blocklength.} In a communication system, the number of antennas characterizes its spatial dimension, while $n$ serves as a measure of the temporal dimension. Let $\mathbf{M}(:,m)$ denote the $m$-th column of matrix $\mathbf{M}$. The transmitted and received symbol vectors at the $m$-th channel use are denoted by $\overline{\mathbf{X}}(:,m) \in \mathbb{C}^{N_t \times 1}$ and $\overline{\mathbf{Y}}(:,m) \in \mathbb{C}^{N_r \times 1}$, respectively, and they satisfy
\begin{equation}
	\overline{\mathbf{Y}}(:,m) = \mathbf{H} \, \overline{\mathbf{X}}(:,m) + \overline{\mathbf{W}}(:,m), \label{n1}
\end{equation}
{where $\overline{\mathbf{W}}(:,m) \sim \mathcal{CN}(0, \sigma^2\mathbf{I}_{N_{r}})$ denotes the additive white Gaussian noise (AWGN) vector, and all entries of the noise matrix $\overline{\mathbf{W}}$ are independent and identically distributed complex Gaussian random variables.}

{Let $N_S=\text{rank}(\mathbf{H})$ denote the spatial DoF of the channel. The left singular matrix \( \mathbf{U} \in \mathbb{C}^{N_r \times N_{S}} \) and the right singular matrix \( \mathbf{V} \in \mathbb{C}^{N_t \times N_{S}} \) consist of the first \( N_{S} \) columns of the singular matrices obtained from the SVD of \( \mathbf{H}\). $\bm{\Lambda} \in \mathbb{R}^{N_S \times N_S}$ is a diagonal matrix, {where the diagonal elements \( \lambda_1 \geq \lambda_2 \geq \cdots \geq \lambda_{{N_S}} \) are the ordered non-zero singular values of the matrix \( \mathbf{H}\mathbf{H}^H \) if \( N_r \ge N_t \); otherwise, they are those of \( \mathbf{H}^H\mathbf{H} \).} Let $\mathbf{p}=[p_1,p_2 ...,p_{N_S}]^T$, where $p_i$ is the power allocated to the $i$-th subchannel, and $\mathbf{P}={\rm diag}\left(\mathbf{p}\right) \in \mathbb{R}^{N_S \times N_S}$. By letting 
\begin{equation} \label{SVD1}
	\begin{aligned}
		{\mathbf{X}}(:,m)&=\mathbf{P}\mathbf{V}^H\overline{\mathbf{X}}(:,m),\\
		{\mathbf{Y}}(:,m)&=\mathbf{U}^H\overline{\mathbf{Y}}(:,m),\\
		{\mathbf{W}}(:,m)&=\mathbf{U}^H\overline{\mathbf{W}}(:,m),\\
	\end{aligned}
\end{equation}
the fading channel (\ref{n1}) is transformed into following parallel Gaussian channel:
\begin{equation} \label{SVD2}
	{\mathbf{Y}}(:,m)=\mathbf{P}\boldsymbol{\Lambda}{\mathbf{X}}(:,m)+{\mathbf{W}}(:,m), 
\end{equation} 
where ${\mathbf{W}}(:,m)\sim \mathcal{CN}(0, \sigma^2\mathbf{I}_{N_S})$ has the same distribution as  $\overline{\mathbf{W}}(:,m)$ since $\mathbf{U}$ is a unitary matrix.  This transformation enables the application of information-theoretic results to compute the rate of MIMO systems, which has guided the development of communication theory over the past decades, see \cite[Ch.7]{Tse} for more details.}

\subsection{Design Principles of TCC and STCC}
TCC and STCC operate on the same channel model in (\ref{SVD2}) but differ in their coding modes, as illustrated in Fig. \ref{codingmodes}. On the right of Fig. \ref{codingmodes}, we depict the coding mode of TCC, where independent information streams are encoded temporally and transmitted on each subchannel. Therefore, the TCC scheme supports the concurrent transmission of up to $N_S$ independent streams, with the blocklength of each codeword to be $n$. {Based on the finite-blocklength normal approximation established by Polyanskiy et al. \cite{Yury}, the closed-form achievable-rate expression for the TCC scheme is given by \cite{FY}}
\begin{equation}
	{{R}_{T}}=\sum\limits_{i=1}^{{{N}_{s}}}{\left( \log \left( 1+\frac{{{p}_{i}}{{\lambda }_{i}}}{\sigma ^{2}} \right)-{{\sqrt{{V}}_{i}}}\frac{{{Q}^{-1}}\left( \varepsilon  \right)}{\sqrt{n}}\log e \right)}, \label{RT}
\end{equation}
{where $\varepsilon$ is the block error probability, $Q^{-1}(\cdot)$ denotes the inverse Gaussian Q-function,}  $p_i$ is the transmit power allocated to the $i$-th subchannel and
\begin{equation}
	{{{V}}_{i}}=1-\frac{1}{{{\left( 1+\frac{{{p}_{i}}{{\lambda }_{i}}}{\sigma ^{2}} \right)}^{2}}}, \label{Vi}
\end{equation}
which is the channel dispersion. {Strictly speaking, the exact normal approximation of finite-blocklength achievable-rate contains an $\mathcal{O}\left(\frac{\log n}{n}\right)$ remainder term \cite{Yury}. Since this term is characterized only in terms of its asymptotic order and generally admits no closed-form expression, it is commonly omitted in analytical derivations and optimization \cite{mono,r1,r2}. Moreover, since the blocklength $n$ is fixed in our optimization problem, even the commonly used correction term 
	$\frac{{\rm log}n}{2n}$ would only introduce a constant offset to the achievable rate and would not affect the optimization framework or the resulting resource allocation. 
	Therefore, following the common practice in \cite{mono,r1,r2}, we neglect the remainder term in our derivations and optimization to obtain a concise and tractable achievable-rate expression.}

The TCC scheme often suffers from significant rate loss due to finite blocklength, as we can observe in (\ref{RT}). Therefore, researchers propose STCC, trying to increase the rate of MIMO systems by trading the spatial DoF for increasing the codeword length of each stream \cite{ex, polar}.
As depicted on the left of Fig. \ref{codingmodes}, multiple subchannels are used to transmit a stream cooperatively in the STCC scheme. Consequently, the STCC scheme trades spatial DoF for an increased codeword length for each codeword, which in turn reduces the maximum allowable number of streams $D$, satisfying $D \le N_S$. {It should be noted that in the Fig. 1, the order of subchannels is shuffled to illustrate that the STCC streams can occupy non-adjacent subchannels, rather than being restricted to adjacent ones.}

{According to the finite-blocklength normal approximation in \cite[Ch. 4.5]{FB2}, the achievable rate of STCC for $D=1$ is expressed as}
\begin{equation} \label{D=1}
	R_{ST|D=1}=\sum\limits_{i=1}^{{{N}_{s}}}{ \log \left( 1+\frac{{{p}_{i}}{{\lambda }_{i}}}{\sigma ^{2}} \right)}-{{\sqrt{\sum\limits_{i=1}^{{{N}_{s}}}{{V}}_{i}}}}\frac{{{Q}^{-1}}\left( \varepsilon  \right)}{\sqrt{n}}\log e,
\end{equation}
where ${V}_i$ is given by (\ref{Vi}) and $\mathbf{p}=[p_1,p_2 ...,p_{N_S}]^T$ should be determined by the power allocation strategy. {As explained following Eq. (\ref{RT}), the remainder term is also omitted in Eq. (\ref{D=1})  to obtain a concise and tractable achievable-rate expression while focusing on the optimization.} It has been proven that $R_{\mathrm{ST}\mid D=1} > R_{\mathrm{T}}$ when transmit power is equally allocated under CSI-R \cite{CL,FY}. However, as far as we know, no results have been reported for the case of multiple streams in STCC when full CSI is available. In such a case, the transmission architecture should be designed and the signal processing method, which is ultimately reflected in the form of subchannel assignment and power allocation, should be determined. This paper aims to address the above issues and explore the potential superiority of STCC over TCC.

\begin{figure*}[t!]
	\centerline{\includegraphics[width=1\linewidth]{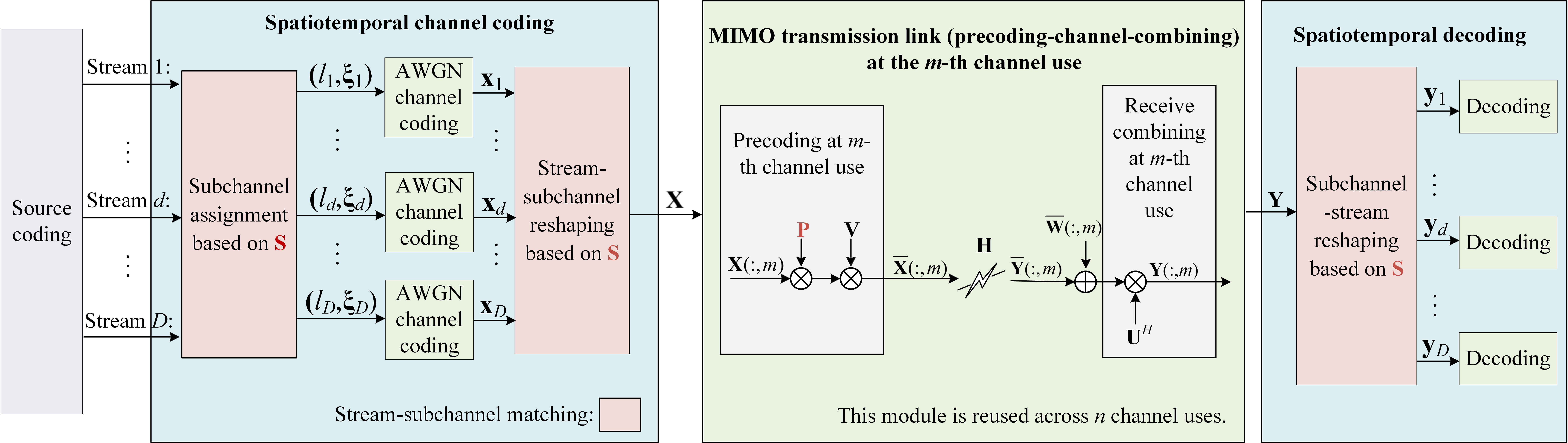}}
	\caption{{Equivalent baseband transmission architecture of STCC-MSC.}}
	\label{system}
\end{figure*}

\section{Transmission Architecture and Rate Maximization Problem Formulation}

In this section, we first build the transmission architecture of STCC-MSC in the P2P MIMO system and then formulate the corresponding rate maximization problem. An effective algorithm for solving this problem is proposed in Section \uppercase\expandafter{\romannumeral4}, and the multi-user scenario is investigated in Section \uppercase\expandafter{\romannumeral5}.

\subsection{Transmission Architecture of STCC-MSC}

{
We still use the system model described in Section \uppercase\expandafter{\romannumeral2}, where the transmitter and receiver are equipped with $N_t$ and $N_r$ antennas, respectively. Same as \cite[Ch. 7]{Tse}, we analyze the equivalent baseband channel for brevity and omit the radio frequency part in this paper.  The equivalent baseband fading channel is $\mathbf{H}$ with rank$(\mathbf{H})=N_S$. The number of allowable streams is $D$, satisfying $D \le N_S$. The number of channel uses is $n$. Since the channel matrix $\mathbf{H}$ remains invariant during the transmission, the precoding and combining strategy can be reused over all $n$ channel uses. With loss of generality, we focus on the $m$-th channel use for analysis and illustration. At the $m$-th channel use, the parallel Gaussian channel in (\ref{SVD2}) is obtained after applying precoding by left-multiplying $\mathbf{P}$ and $\mathbf{V}$, and receive combining by left-multiplying $\mathbf{U}^H$. As described in Section \uppercase\expandafter{\romannumeral2}-A, $\mathbf{V}$ and $\mathbf{U}$ are determined by $\mathbf{H}$, whereas $\mathbf{P}$ needs to be designed.
Let $\mathcal{D}=\left\{1,2,...,D\right\}$ and $\mathcal{N}_S=\left\{1,2,...,N_S\right\}$ denote the set of streams and subchannels, respectively.  $\bm{\xi}=[\xi_1,\xi_2,...,\xi_{N_S}]^T$ denotes the SNRs of the subchannels, with $\xi_i = \frac{p_i\lambda_{i}}{\sigma^2}, \forall i \in \mathcal{N_S}$. }

{
Compared with the traditional TCC architecture, which involves only AWGN channel coding~\cite[Fig.~7.2]{Tse}, the proposed  architecture of STCC-MSC introduces a co-design of AWGN coding and stream–subchannel matching, enabling each stream to be transmitted over multiple subchannels, as illustrated in Fig.~\ref{system}. {The stream–subchannel matching is determined by the assignment matrix $\mathbf{S} \in \{0,1\}^{N_S \times D}$, where $\{0,1\}^{N_S \times D}$ denotes the set of $N_S \times D$-dimension binary matrices.} Let $\mathbf{S}(i,d)$ denote the element in the $i$-th row and $d$-th column of $\mathbf{S}$.  $\mathbf{S}(i,d)=1$ indicates that the $i$-th subchannel is assigned to the $d$-th stream, and $\mathbf{S}(i,d)=0$ otherwise. The matching is logically divided into two stages, which are executed before and after AWGN channel coding, respectively.}

Firstly, STCC-MSC assigns the $N_S$ subchannels to $D$ streams based on $\mathbf{S}$, resulting in the sets $\left\{\bm{\xi}_d\right\}_{d=1}^D$ and $\left\{l_d\right\}_{d=1}^D$. Here, $l_d$ denotes the number of nonzero entries in the $d$-th column of $\mathbf{S}$, indicating the number of subchannels assigned to the $d$-th stream. {Consequently, every stream in STCC is transmitted over the same $n$ channel uses, while the codeword length of stream $d$ is $l_dn$ transmitted symbols.}
 The vector $\bm{\xi}_d \in \mathbb{R}^{l_d \times 1}$ contains the SNRs of the subchannels assigned to the $d$-th stream and is given by
\begin{equation}
	\bm{\xi}_d = \left( \mathbf{S}(:,d) \odot \boldsymbol{\xi} \right)_{+}, \quad \forall d \in \mathcal{D},
\end{equation}
where $\mathbf{S}(:,d)$ denotes the $d$-th column of $\mathbf{S}$, $\odot$ denotes the Hadamard (element-wise) product, and $(\mathbf{m})_{+}$ denotes the operation that removes the zero entries from the vector $\mathbf{m}$. Next, based on the obtained $l_d$ and $\bm{\xi}_d$, the codeword of the $d$-th stream after AWGN channel coding is denoted by $\mathbf{x}_d \in \mathbb{C}^{1 \times l_d n}$. Finally, the overall spatiotemporal codeword $\mathbf{X} \in \mathbb{C}^{N_S \times n}$ is constructed according to the stream-subchannel reshaping based on $\mathbf{S}$. Let $\mathbf{M}(i,:)$ denote the $i$-th row of a matrix $\mathbf{M}$, and let $\mathbf{m}(b:c)$ denote the subvector of $\mathbf{m}$ from the $b$-th to the $c$-th element. The detailed matching process is presented in \textbf{Algorithm~\ref{Al1}}. An illustrative example of STCC is shown in Fig.~\ref{s2p}, where 3 streams are mapped to 6 subchannels. According to the given assignment matrix $\mathbf{S}$, the values of $l_1$, $l_2$, and $l_3$ are 1, 2, and 3, respectively. { Fig. 3 shows only one illustrative stream-subchannel assignment; in general, $\mathbf{S}$ satisfies  $\sum\limits_{d=1}^D\mathbf{S}(i,d)\le1, \forall i \in \mathcal{N_S}$, and therefore a subchannel is not necessarily assigned to any stream.} {Moreover, The ordering of $\mathbf{x}_1, \mathbf{x}_2, \mathbf{x}_3$
	does not affect the achievable rate, which is determined by the stream-subchannel assignment.}

At the receiver side, the received signal matrix $\mathbf{Y}$ is converted into a set of stream-specific observations $\left\{\mathbf{y}_d\right\}_{d=1}^D$ through a subchannel-stream reshaping, which essentially reverses the process described in \textbf{Algorithm~\ref{Al1}}. {Since the reshaping process is information-lossless and each stream operates as an independent system comprising $l_d$ subchannels, the total information density between the transmitted and received signals equals the sum of stream-level information densities, satisfying $\tilde{i}(\mathbf{X}; \mathbf{Y}) = \sum_{d=1}^{D} \tilde{i}(\mathbf{x}_d; \mathbf{y}_d)$.} Applying (\ref{D=1}), the achievable rate of the $d$-th stream is given by
\begin{equation} \label{Rd}
	R_{d}=\sum\limits_{i=1}^{{{N}_{s}}}\mathbf{S}(i,d){ \log \left( 1+\frac{{{p}_{i}}{{\lambda }_{i}}}{\sigma^{2}} \right)}-{a_{n,\varepsilon}}{{\sqrt{\sum\limits_{i=1}^{{{N}_{s}}}{\mathbf{S}(i,d){V}}_{i}}}},
\end{equation}
where ${a_{n,\varepsilon}}=\frac{{{Q}^{-1}}\left( \varepsilon  \right)}{\sqrt{n}}\log e$ and ${V}_{i}$ is given by (\ref{Vi}).   The rate of the P2P MIMO system using STCC-MSC is the sum of the rates of all streams as
\begin{equation} \label{RST}
	R_{ST}=\sum\limits_{d=1}^{D}R_d.
\end{equation}

\begin{algorithm}[t]
	\caption{Stream-subchannel reshaping in STCC-MSC}
	\label{Al1} 
	\begin{algorithmic}[1]
		\Statex \textbf{Input}: $\left\{{\mathbf{x}}_d\right\}_{d=1}^D$, $\mathbf{S}$. 
		\vspace{0.1cm}
		\Statex \textbf{Output:} $\mathbf{X}$. 
		\State Initialize $\mathbf{X}=\mathbf{0}_{N_S \times n}$, $d=1$, $i = 1$, and $k = 0$.
		\Repeat
		\Repeat
		\If{$\mathbf{S}(i,d) = 1$}
		\State $\mathbf{X}(i,:) = 			{\mathbf{x}}_d(kn+1:(k+1)n)$.
		\State Update $k = k + 1$.
		\EndIf
		\State Update $i = i + 1$.
		\Until{$i = N_S$}
		\State Reset $k=0$ and update $d = d + 1$.
		\Until{$d = D$}
		\State \Return $\mathbf{X}$.
	\end{algorithmic}
\end{algorithm}

\begin{figure}[t]
	\centerline{\includegraphics[width=1\linewidth]{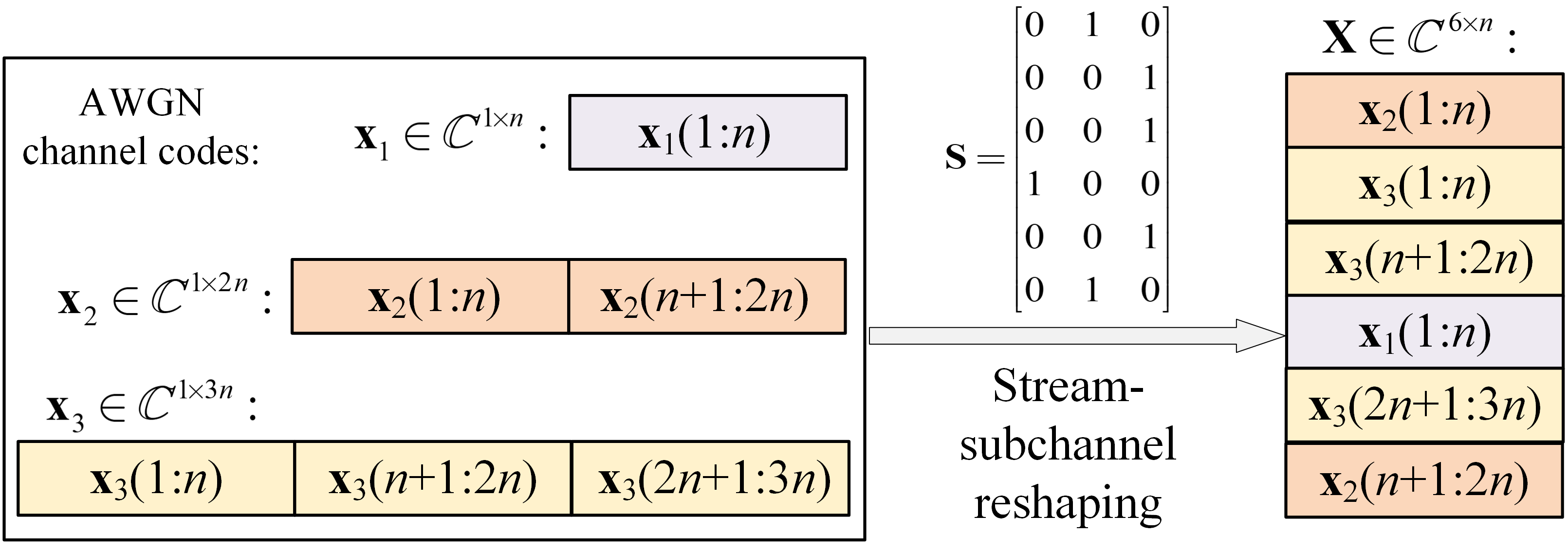}}
	\caption{{A case illustration of STCC-MSC.}}
	\label{s2p}
\end{figure}

\subsection{Rate Maximization Problem Formulation}
 Let $\gamma_i = \frac{\lambda_{i}}{\sigma^2}, \forall i \in \mathcal{N_S}$.   By referring to (\ref{Rd}) and (\ref{RST}), the rate maximization problem for STCC can be equivalently formulated by minimizing $-R_{ST}$ as
\begin{subequations} \label{P1}
	\begin{align}
		 \underset{\mathbf{p},\mathbf{S}}{\mathop{\min }}\,&\sum\limits_{d=1}^{D}{\left( {a_{n,\varepsilon}}\sqrt{\sum\limits_{i=1}^{{{N}_{S}}}{\mathbf{S}(i,d)V_i(p_i)}}-\sum\limits_{i=1}^{{{N}_{S}}}{\mathbf{S}(i,d)\log \left( 1+{{p}_{i}}{{\gamma }_{i}} \right)} \right)} \label{obj} \\ 
		 \text{  s}\text{.t}\text{. }&\sum\limits_{i=1}^{{{N}_{S}}}{{{p}_{i}}\le P},\text{    } \label{con1}\\ 
		& {{p}_{i}}\ge 0,\text{ }\forall i\in \mathcal{N}_S, \label{con2}\\ 
		& \mathbf{S}(i,d)=\{0,1\},\text{ }\forall i\in \mathcal{N}_S,\text{ } \forall d\in \mathcal{D}, \label{con3}\\ 
		& \sum\limits_{d=1}^{D}\mathbf{S}(i,d) \le 1, \text{ }\forall i\in \mathcal{N}_S. \label{con4}
	\end{align}
\end{subequations}
Constraint (\ref{con1}) caps the transmit power at $P$ and constraint (\ref{con2}) requires non-negative transmit power for each subchannel. (\ref{con3}) are binary constraints for the elements of the subchannel assignment matrix $\mathbf{S}$. {Constraints (\ref{con4}) mean that each subchannel can be assigned to at most one stream.}

{Problem (\ref{P1}) is a non-convex MINLP problem because of the coupling of the optimization variables} in the objective function (\ref{obj}) and the binary constraints (\ref{con3}). {It is intractable to find the globally optimal solution to problem (\ref{P1}) in polynomial time.} Although some search methods, such as exhaustive search and the Branch-and-Bound algorithm, can theoretically find the global optimum, their prohibitively high complexity limits practical applicability \cite{ADMM}. Therefore, it is necessary to find a locally optimal solution with reasonable computational complexity.

{A problem similar to (\ref{P1}) is studied in \cite{PE},} where multiple sub-carriers are used to transmit a stream to each single-antenna user, in the multi-user downlink orthogonal frequency division multiple access (OFDMA) system. The authors in \cite{PE} use the big-M technique to decouple the variables, and then propose a successive convex approximation (SCA) algorithm to tackle the non-convexity in the objective function and the binary constraints.  {However, the convex approximation disrupts the structure of the big-M reformulation, yielding solutions that violate the original constraints and thereby degrading the algorithm performance.} Another commonly used technique to decouple the variables is alternating optimization, which optimizes some variables while fixing others during each iteration. {However, using alternating optimization to optimize $\mathbf{S}$ and $\mathbf{p}$ iteratively is not applicable in solving problem (\ref{P1}) because optimizing $\mathbf{S}$ with fixed $\mathbf{p}$ is a spurious optimization problem, as discussed in Appendix \ref{appa1}.} Therefore, in the next section, {we propose a penalized alternating convex approximation algorithm to efficiently obtain a locally optimal solution to problem (\ref{P1}).} Numerical results in Section \uppercase\expandafter{\romannumeral6} demonstrate that it outperforms the big-M convex approximation (BMCA) algorithm proposed in \cite{PE}.

\section{Proposed Penalized Alternating Convex Approximation Algorithm}
{We define a matrix $\mathbf{Q} \in \mathbb{R}^{N_S \times D}$, whose $d$-th column $\mathbf{Q}(:,d)=\mathbf{S}(:,d)\odot\mathbf{p}, \forall d \in \mathcal{D}$, where $\mathbf{M}(:,d)$ denotes the $d$-th column vector of matrix $\mathbf{M}$.} The original problem (\ref{P1}) can be equivalently rewritten as
\begin{subequations} \label{P2}
	\begin{align}
		{\rm P1}:\text{ } \underset{\mathbf{Q}}{\mathop{\min }}\text{ }\,&V(\mathbf{Q})-C(\mathbf{Q}) \label{p2obj}\\ 
		 \text{  s}\text{.t}\text{. }&\sum\limits_{d=1}^{D}{\sum\limits_{i=1}^{{{N}_{S}}}{\mathbf{Q}(i,d)}}\le P, \\ 
		& ||\mathbf{Q}(i,:)|{{|}_{0}}\le 1,\text{ }\forall i\in \mathcal{N}_S, \label{p2con2} \\ 
		& \mathbf{Q}(i,d)\ge 0,\text{ }\forall i\in \mathcal{N}_S,\text{ }\forall d\in \mathcal{D}. 
	\end{align}
\end{subequations}
where
\begin{equation}
	V(\mathbf{Q})={a_{n,\varepsilon}}\sum\limits_{d=1}^{D} \sqrt{\sum\limits_{i=1}^{{{N}_{S}}}{\left(1-\frac{1}{{{\left( 1+\mathbf{Q}(i,d){{\gamma }_{i}} \right)}^{2}}}\right)}},
\end{equation}
\begin{equation}
	C(\mathbf{Q})=\sum\limits_{d=1}^{D}\sum\limits_{i=1}^{{{N}_{S}}}{\log \left( 1+\mathbf{Q}(i,d){{\gamma }_{i}} \right)},
\end{equation}
$\mathbf{Q}(i,:)$ denotes the $i$-th row of the matrix $\mathbf{Q}$, and $||\mathbf{m}|{{|}_{0}}$ is the $\ell_0$-norm of vector $\mathbf{m}$, which is number of non-zero elements in vector $\mathbf{m}$. 

Problem (P1) is still non-convex because of the concave objective function (\ref{p2obj}) and the sparsity constraints (\ref{p2con2}). We first apply the penalty decomposition method to tackle constraints   (\ref{p2con2}). By introducing the auxiliary variable $\mathbf{G} \in \mathbb{R}^{N_S \times D}$ and the penalty term, we rewrite problem (P1) as \cite{PD}
\begin{subequations} \label{P3}
	\begin{align}
	{\rm P2}:\text{ } \underset{\mathbf{Q},\mathbf{G}}{\mathop{\min }}\text { }\,&V(\mathbf{Q})-C(\mathbf{Q})+\sum\limits_{i=1}^{N_S}{\rho||\mathbf{G}(i,:)-\mathbf{Q}(i,:)||_2^2} \label{p3obj}\\ 
		\text{  s}\text{.t}\text{. }&\sum\limits_{d=1}^{D}{\sum\limits_{i=1}^{{{N}_{S}}}{\mathbf{Q}(i,d)}}\le P, \\ 
		& \mathbf{Q}(i,d)\ge 0,\text{ }\forall i\in \mathcal{N}_S,\text{ }\forall d\in \mathcal{D},\\
		& ||\mathbf{G}(i,:)|{{|}_{0}}\le 1,\forall i\in \mathcal{N}_S, \label{p3con2} 
	\end{align}
\end{subequations}
where $\rho$ is the penalty factor and $||\mathbf{m}||_2$ is the $l_2$ norm of the vector $\mathbf{m}$. When $\rho$ gradually increases and becomes large enough, for the optimal solution of problem (P2), it holds that $\mathbf{G}=\mathbf{Q}$, so problem (P2) is equivalent to problem (P1). Although problem (P2) is still non-convex, we can apply the alternating optimization method to decouple it into two sub-problems: (1) optimizing $\mathbf{Q}$ with fixed $\mathbf{G}$ and (2) optimizing $\mathbf{G}$ with fixed $\mathbf{Q}$. By solving the two sub-problems iteratively, we can obtain a locally optimal solution of problem (P2). 

\textit{(1) Optimizing $\mathbf{Q}$ with fixed $\mathbf{G}$}: this sub-problem of (P2) is expressed as
\begin{subequations} \label{P3.1}
	\begin{align}
	{\rm P2.1}:\text{ }	\underset{\mathbf{Q}}{\mathop{\min }}\text { }\,&V(\mathbf{Q})-C(\mathbf{Q})+\sum\limits_{i=1}^{N_S}{\rho||\mathbf{G}(i,:)-\mathbf{Q}(i,:)||_2^2} \label{p31obj}\\ 
		\text{  s}\text{.t}\text{. }&\sum\limits_{d=1}^{D}{\sum\limits_{i=1}^{{{N}_{S}}}{\mathbf{Q}(i,d)}}\le P, \label{p31con1}\\ 
		& \mathbf{Q}(i,d)\ge 0,\text{ }\forall i\in \mathcal{N}_S,\forall d\in \mathcal{D} \label{p31con2}.
	\end{align}
\end{subequations}
Problem (P2.1) is still non-convex because of the concave function $V(\mathbf{Q})$ in the objective function (\ref{p31obj}). {Therefore, we apply the SCA method to tackle the concave term. Since the elements within the same column $d$ are coupled in $V(\mathbf{Q})$, we apply the chain rule to derive the partial derivative. By dropping the constant term $V(\mathbf{Q}^{(\iota)})$ which does not affect the optimization variables, the linear term used in the $\iota$-th iteration is formulated as
\begin{equation} \label{17}
	a\sum\limits_{d=1}^{D}{\sum\limits_{i=1}^{{{N}_{s}}}{V_{i}^{'}\left( {{\mathbf{Q}}^{(\iota )}}(i,d) \right)\left( \mathbf{Q}(i,d)-{{\mathbf{Q}}^{(\iota )}}(i,d) \right)}},
\end{equation}
where ${{\mathbf{Q}}^{(\iota )}}(i,d)$ denotes the element in the $i$-th row and $d$-th column of the matrix ${{\mathbf{Q}}^{(\iota )}}$ and for an arbitrary matrix $\mathbf{M}$ matching the dimension of $\mathbf{Q}$, the function $V_{i}^{'}\left( \mathbf{M}(i,d) \right)$ is defined as
\begin{equation}
	V_{i}^{'}\left( \mathbf{M}(i,d) \right)=\frac{{{\gamma }_{i}}}{{{\left( 1+\mathbf{M}(i,d){{\gamma }_{i}} \right)}^{3}}\sqrt{\sum\limits_{k=1}^{{{N}_{s}}}{\left( 1-\frac{1}{{{\left( 1+\mathbf{M}(k,d){{\gamma }_{k}} \right)}^{2}}} \right)}}}.
\end{equation}
}
By removing the constant part in (\ref{17}), we define
\begin{equation}
	f^{(\iota)}(\mathbf{Q})={a_{n,\varepsilon}}\sum\limits_{d=1}^{D}{\sum\limits_{i=1}^{{{N}_{s}}}{V_{i}^{'}\left( {{\mathbf{Q}}^{(\iota )}}(i,d) \right) \mathbf{Q}(i,d)}}.
\end{equation}
Then the problem (P2.1) is reformulated as the
following approximated convex optimization problem in the $\iota+1$ iteration:
\begin{subequations} \label{P3.1.1}
	\begin{align}
	{\rm P2.1.1}:\text{ }	\underset{\mathbf{Q}}{\mathop{\min }}\text { }\,&f^{(\iota)}(\mathbf{Q})-C(\mathbf{Q})+\sum\limits_{i=1}^{N_S}{\rho||\mathbf{G}(i,:)-\mathbf{Q}(i,:)||_2^2} \label{p211obj}\\ 
		\text{  s}\text{.t}\text{. }&(\ref{p31con1}), (\ref{p31con2}).
	\end{align}
\end{subequations}
Problem (P2.1.1) can be solved optimally by existing
convex optimization solvers such as CVX \cite{cvx}, and the optimal
solution yields the next iteration. The limit solution at the
convergence of the SCA also satisfies
the Karush-Kuhn-Tucker conditions of the problem
(P2.1). Thus, we can obtain the locally optimal solution of the problem (P2.1) by solving the convex problem
(P2.1.1) iteratively.

\textit{(2) Optimizing $\mathbf{G}$ with fixed $\mathbf{Q}$}: this sub-problem of (P2) is expressed as
\begin{subequations} \label{P3.2}
	\begin{align}
	({\rm P2.2}):\text{ }	\underset{\mathbf{G}}{\mathop{\min }}\text { }\,&\sum\limits_{i=1}^{N_S}{\rho||\mathbf{G}(i,:)-\mathbf{Q}(i,:)||_2^2}\\ 
		\text{  s}\text{.t}\text{. }& ||\mathbf{G}(i,:)|{{|}_{0}}\le 1,\text{ }\forall i\in \mathcal{N}_S.
	\end{align}
\end{subequations}
The optimal solution of problem (P2.2) in the $\varsigma$- iteration is \cite[prop. 3.1]{PD}
\begin{equation} \label{G}
	\mathbf{G}^{(\varsigma)}(i,d)=\delta_{i,d}\mathbf{Q}(i,d)\text{ }\forall i\in \mathcal{N}_S,\text{ }\forall d\in \mathcal{D},
\end{equation}
where 
\begin{equation}
	\delta_{i,d}=\begin{cases}
		1, & \text{if } \mathbf{Q}(i,j)=\underset{1\le j\le D}{\mathop{\max }}\,\mathbf{Q}(i,j)\\
		0, & \text{otherwise.}
	\end{cases}
\end{equation}
This means that for each $i \in \mathcal{N_S}$, the row vector $\mathbf{G}^{(\varsigma)}(i,:)$ is constructed by preserving the maximum element in $\mathbf{Q}(i,:)$ and setting all other entries in the row to 0.

\begin{algorithm}[t]
	\caption{Penalized Alternating Convex Approximation (PACA) Algorithm}
	\label{al2}
	\begin{algorithmic}[1]
		\Statex \textbf{Input:} $\bm{\gamma}$, $D$, $N_S$, $P$, $a$, $\eta$, $\epsilon$, $\epsilon_O$
		\Statex \textbf{Output:} $\mathbf{p}^{\star}$, $\mathbf{S}^{\star}$
		
		\State Initialize $\tau = 0$, $\varsigma = 0$, $\iota = 0$
		\State Initialize $\rho > 0$, $\mathbf{Q}^{(0)}$
		\State Set $f_o^{(0)} = f_m^{(0)} = f_i^{(0)} = 0$
		
		\Repeat \Comment{Outer loop: penalty update}
		\Repeat \Comment{Middle loop: alternating optimization}
		\State Update $\mathbf{G}^{(\varsigma+1)}$ according to (\ref{G})
		\Repeat \Comment{Inner loop: SCA}
		\State Solve (P2.1.1) to obtain $\mathbf{Q}^{(\iota+1)}$
		\State Compute $f_i^{(\iota+1)}$
		\State $\iota \gets \iota + 1$
		\Until $|f_i^{(\iota)} - f_i^{(\iota-1)}| \le \epsilon$
		
		\State Set $\mathbf{Q}^{(\varsigma+1)} = \mathbf{Q}^{(\iota)}$
		\State Compute $f_m^{(\varsigma+1)}$
		\State $\varsigma \gets \varsigma + 1$
		\Until $|f_m^{(\varsigma)} - f_m^{(\varsigma-1)}| \le \epsilon$
		
		\State Set $\mathbf{Q}^{(\tau+1)} = \mathbf{Q}^{(\varsigma)}$, $\mathbf{G}^{(\tau+1)} = \mathbf{G}^{(\varsigma)}$
		\State Compute $f_o^{(\tau+1)}$
		\State $\rho \gets \eta \rho$, $\tau \gets \tau + 1$
		\Until $||\mathbf{Q}^{(\tau)}-\mathbf{G}^{(\tau)}||_2^2 \le \epsilon_O$ \& $|f_o^{\left(\tau\right)}-f_o^{\left(\tau-1\right)}| \le \epsilon$
		
		\State Recover $\mathbf{p}^{\star}$ and $\mathbf{S}^{\star}$ from $\mathbf{Q}^{(\tau)}$
		\State \Return $\mathbf{p}^{\star}$, $\mathbf{S}^{\star}$
	\end{algorithmic}
\end{algorithm}

The proposed algorithm is detailed in \textbf{Algorithm~\ref{al2}}, which involves three layers of iteration. Let $\tau$, $\varsigma$, and $\iota$ denote the iteration indices of the outer, middle, and inner loops, respectively. The objective functions are denoted by $f_{o}^{(\tau)}$, $f_{m}^{(\varsigma)}$, and $f_{i}^{(\iota)}$, corresponding to problems (P1), (P2), and (P2.1.1), respectively, with their expressions given in (\ref{p2obj}), (\ref{p3obj}), and (\ref{p211obj}). In the outer iteration, the penalty factor $\rho$ is gradually increased to enforce the equivalence between problem (P1) and problem (P2). In the middle iteration, sub-problems (P2.1) and (P2.2) are solved alternately to obtain a locally optimal solution to problem (P2), where (P2.2) is convex, whereas (P2.1) is non-convex. Therefore, in the inner iteration, a locally optimal solution to problem (P2.1) is obtained by iteratively solving the approximated convex problem (P2.1.1). 

 The stopping condition of the outer iteration permits treating the converged solution $\mathbf{Q}^{(\tau)}$ as approximately feasible for problem (P1), as each row of $\mathbf{Q}^{(\tau)}$ contains at most one dominant (i.e., significantly nonzero) element, with all others being smaller than a threshold $\Gamma$, which can be arbitrarily small depending on the value of $\epsilon_O$ in the stopping criterion. Based on $\mathbf{Q}^{(\tau)}$, we recover the vector $\mathbf{p}^{\star}$ and the binary matrix $\mathbf{S}^{\star}$  as follows.  
For every $i \in \mathcal{N_S}$, 
\begin{equation}
	\mathbf{p}^{\star}(i) =
	\begin{cases}
		\mathbf{Q}^{(\tau)}(i, d), & \text{if } \exists\, d \text{ such that } \mathbf{Q}^{(\tau)}(i, d) > \Gamma \\
		0. & \text{otherwise}
	\end{cases}
\end{equation}
 Similarly, the subchannel assignment matrix $\mathbf{S}^{\star}$ is recovered as
\begin{equation}
	\mathbf{S}^{\star}(i,d) =
	\begin{cases}
		1, & \text{if } \mathbf{Q}^{(\tau)}(i,d) > \Gamma \\
		0, & \text{otherwise}
	\end{cases}
\end{equation}
for every $i \in \mathcal{N_S}$ and $d \in \mathcal{D}$.

{The convergence of the proposed PACA algorithm follows from the convergence of its three nested loops. Specifically, for a fixed $\mathbf{G}$, the inner SCA loop converges to a stationary point of problem (P2.1.1). After the inner loop converges, $\mathbf{Q}$ is updated and then alternately optimized with $\mathbf{G}$ in the middle loop until convergence. Finally, the outer loop updates the penalty parameter according to the penalty decomposition framework, whose convergence to a locally optimal solution is guaranteed by Th. 4.3 in \cite{PD}. Since each outer-level iteration is initialized with the converged solution of its inner-level iteration, the convergence of the proposed PACA algorithm is guaranteed.}

The computational complexity of interior-point methods is typically of the order $\mathcal{O}(\alpha^3)$, where $\alpha$ denotes the number of decision variables. In problem (P2.1.1), the optimization variable $\mathbf{Q} \in \mathbb{R}^{N_S \times D}$ contains a total of $\alpha = N_S D$ variables. Therefore, the worst-case per-iteration complexity of an interior-point method applied to this problem is $\mathcal{O}((N_S D)^3)$. Consequently, the overall computational complexity of the proposed algorithm is $l_1 l_2 l_3 \mathcal{O}((N_S D)^3)$, where $l_1$, $l_2$, and $l_3$ denote the numbers of iterations in the outer, middle, and inner loops, respectively. {The BMCA algorithm operates iteratively based on SCA, where a subproblem of three $N_{S}D$-dimensional variables is solved per iteration. With the subproblem solved via the interior-point method with a complexity of 
$\mathcal{O}\left((N_S D)^3\right)$, the overall complexity of BMCA until convergence is $l_{\text{B}}\mathcal{O}\left( (N_S D)^3\right)$, where $l_B$ is the number of iterations. Consequently, the PACA and BMCA algorithms are of the same order of computational complexity, as their scaling behaviors are both determined by  $\mathcal{O}\left((N_S D)^3\right)$.}

\section{STCC-MSC in Multi-user MIMO Systems}

{The achievable rate of a multi-user MIMO communication system in a single cell critically depend on how inter-user interference is managed. For traditional TCC scheme, dirty paper coding (DPC) is the only known capacity-achieving strategy for MIMO downlink (broadcast) channels, while successive interference cancellation (SIC) serves the same role in MIMO uplink (multiple-access) channels \cite{gold}. However, both DPC and SIC rely on multi-layer nonlinear processing, which makes them impractical for real-time implementation. Moreover, in the finite blocklength regime, where the communication inherently experiences a nonzero error probability, the error propagation effect in SIC and DPC will be amplified, making it non-negligible. As a low-complexity and widely used alternative, block diagonalization (BD) linearly eliminates inter-user interference by projecting each user’s signal onto the null space of others, effectively transforming the multi-user MIMO channel into a set of parallel P2P MIMO channels \cite{ZF1}. Although BD is sub-optimal in terms of capacity, it can asymptotically approaches the capacity in the high-SNR regime \cite{ZF1}, and its low implementation complexity makes it attractive for practical systems.}

{
Other transceiver designs such as successive zero-forcing DPC \cite{ZFSIC}, weighted MMSE \cite{WMMSE}, and related approaches \cite{IC} achieve better interference suppression–noise amplification trade-offs in certain regimes, but typically incur higher computational complexity and still leave residual interference. To date, information-theoretic results for STCC are only available for parallel Gaussian channels. Therefore, this section adopts BD to eliminate inter-user interference, thereby enabling the application of the STCC-MSC architecture and the joint subchannel and power allocation strategy proposed in Sections \uppercase\expandafter{\romannumeral3} and \uppercase\expandafter{\romannumeral4}.}

{Let $K$ denote the number of users.} The $k$-th user is equipped with $N_k$ antennas. The set of users is denoted by $\mathcal{K}=\left\{1,2,...,K\right\}$. The base station (BS) is equipped with $N_B$ antennas, satisfying $\sum\limits_{k=1}^K{N_k} \le N_B$.  In the following two sub-sections, we separately consider the downlink and uplink cases. For clarity, the same notation \( \mathbf{H}_k \) is used to represent the channel matrix between the BS and the \( k \)-th user. Specifically, \( \mathbf{H}_k \in \mathbb{C}^{N_k \times N_B} \) in the downlink and \( \mathbf{H}_k \in \mathbb{C}^{N_B \times N_k} \) in the uplink. The allowable number of streams of user $k$ is $D_k$, satisfying $D_k \le N_{S,k} \le N_k$, where $N_{S,k}={\rm rank}(\mathbf{H}_k)$ is the $k$-th user's spatial DoF. The set of subchannels and streams for user $k$ are denoted by $\mathcal{N}_{S,k}=\left\{1,2,...,N_{S,k}\right\}$ and $\mathcal{D}_k=\left\{1,2,...,D_k\right\}$, respectively.

\subsection{Multi-User Downlink STCC-MSC}

Let $\mathbf{A}_k\overline{\mathbf{x}}_k \in \mathbb{C}^{N_B \times 1}$ denote the transmitted symbol vector for user $k$ at a channel use, where $\mathbf{A}_k$ is the inter-user interference elimination matrix for user $k$. The transmitted symbol vector at the BS is $\sum\limits_{k=1}^K{\mathbf{A}_k\overline{\mathbf{x}}_k}$. The received signal at the $k$-th user is 
\begin{equation} \label{dl}
	\overline{\mathbf{y}}_k=\mathbf{H}_k\mathbf{A}_k\overline{\mathbf{x}}_k+\sum\limits_{j=1,j\ne k}^{K}{\mathbf{H}_k\mathbf{A}_j\overline{\mathbf{x}}_j}+\mathbf{w}_k,
\end{equation}
where $\mathbf{w}_k \sim \mathcal{CN}(0, \sigma_k^2\mathbf{I}_{N_k \times N_k})$ is the AWGN. In the downlink communication, the BD method projects the signal intended for user $k$ onto the null space of the channels of all other users. Defining
\begin{equation} \label{BD1}
	\overline{\mathbf{H}}_k=\left[\mathbf{H}_1^T,...,\mathbf{H}_{k-1}^T,\mathbf{H}_{k+1}^T,...,\mathbf{H}_K^T\right]^T \in \mathbb{C}^{\overline{N}_k \times N_B},
\end{equation}
which denotes the combined channel matrix of all users except user $k$,  where $\overline{N}_k=\sum\limits_{j=1,j \ne k}^K{N_j}$, satisfying $N_B-\overline{N}_k \ge N_k$ because $\sum\limits_{k=1}^K{N_k} \le N_B$. The SVD of $\overline{\mathbf{H}}_k$ is expressed as
\begin{equation}\label{BD2}
	\overline{\mathbf{H}}_k=\overline{\mathbf{U}}_k\left[ \begin{matrix}
		{{\overline{\mathbf{\Lambda }}}_{k}} & \mathbf{0}  \\
		\mathbf{0} & \mathbf{0}  \\
	\end{matrix} \right]
	{{\left[ \begin{matrix}
				\overline{\mathbf{V}}_{k}^{(1)} & \overline{\mathbf{V}}_{k}^{(0)}  \\
			\end{matrix} \right]}^{H}},
\end{equation}
where $\overline{\mathbf{\Lambda}}_k \in \mathbb{R}^{\overline{N}_k \times \overline{N}_k}$ is the diagonal matrix whose diagonal entries are the singular values. $\overline{\mathbf{V}}_{k}^{(1)}$ and $\overline{\mathbf{V}}_{k}^{(0)}$ are  matrices with $N_B$ rows, containing the first $\overline{N}_k$  and the last $\left(N_B-\overline{N}_k\right)$ right singular vectors, respectively. Since the columns of $\overline{\mathbf{V}}_{k}^{(0)}$ form an orthonormal basis for the null space of $\overline{\mathbf{H}}_k$, we select its first $N_k$ columns to construct the inter-user interference elimination matrix for user $k$, denoted by $\mathbf{A}_k \in \mathbb{C}^{N_B \times N_k}$. The matrices $\left\{ \mathbf{A}_k \right\}_{k=1}^K$ satisfy that $\mathbf{H}_k\mathbf{A}_j=\mathbf{0}, \text{ } \forall j \ne k$. Therefore, the channel (\ref{dl}) can be rewritten as
\begin{equation} \label{dl1}
	\overline{\mathbf{y}}_k=\mathbf{H}_k\mathbf{A}_k\overline{\mathbf{x}}_k+\mathbf{w}_k,
\end{equation}
where $\overline{\mathbf{x}}_k \in \mathbb{C}^{N_k \times 1}$. 
The multi-user MIMO downlink system is now decoupled into \( K \) parallel point-to-point sub-systems, where each sub-system can make full use of its spatial DoF, because \( N_k \ge N_{S,k} \).

Since $\mathbf{A}_k$ consists of a set of orthonormal basis, multiplying a signal vector by $\mathbf{A}_k$ does not change the transmit power, i.e., $||\mathbf{A}_k\overline{\mathbf{x}}_k||_2^2=||\overline{\mathbf{x}}_k||_2^2$. Therefore, for each user, we obtain a P2P MIMO communication sub-system by equivalently treating \( \mathbf{H}_k \mathbf{A}_k \) in (\ref{dl1}) as the effective channel matrix in (\ref{n1}). Based on this, the designs presented in Sections~\uppercase\expandafter{\romannumeral3} and~\uppercase\expandafter{\romannumeral4} can be applied to obtain the achievable rate of the downlink multi-user STCC-MSC. Each sub-system follows the transmission architecture illustrated in Fig.~\ref{system}, where the left singular matrix \( \mathbf{U}_k \in \mathbb{C}^{N_k \times N_{S,k}} \) and the right singular matrix \( \mathbf{V}_k \in \mathbb{C}^{N_B \times N_{S,k}} \) consist of the first \( N_{S,k} \) columns of the singular matrices obtained from the SVD of \( \mathbf{H}_k \mathbf{A}_k \) for the \( k \)-th sub-system. $\bm{\lambda}_k=\left[\lambda_{k,1}, \lambda_{k,2}...,\lambda_{k,N_{S,k}}\right]^T$ are the ordered non-zero singular values of the matrix $\mathbf{H}_k \mathbf{A}_k\left(\mathbf{H}_k \mathbf{A}_k\right)^H$. 

Let $\mathbf{p}_k=\left[p_{k,1},p_{k,2},...,p_{k,N_{S,k}}\right]^T$ denote the power allocation vector for the $k$-th sub-system and $\gamma_{k,i}=p_{k,i}/\sigma_k^2,\text{ }\forall k \in \mathcal{K}, \forall i\in \mathcal{N}_{S,k}$. The maximum transmit power at the BS is $P_D$. Recalling problem (\ref{P1}), the subchannel assignment matrices $\left\{\mathbf{S}_k \in \mathbb{R}^{N_{S,k}\times D_k}\right\}_{k=1}^K$ and the power allocation vectors $\left\{\mathbf{p}_k \in \mathbb{R}^{N_{S,k}\times 1}\right\}_{k=1}^K$ are determined by the solution to the following sum-rate maximization problem:
\begin{subequations} \label{PDL}
	\begin{align}
		\underset{\left\{\mathbf{p}_k\right\}_{k=1}^K,\left\{\mathbf{S}_k\right\}_{k=1}^K}{\mathop{\min }}\,& -\sum\limits_{k=1}^{K}{\overline{R}_k} \label{dlobj} \\ 
		\hspace{-0.9em}\text{  s}\text{.t}\text{. }&\sum\limits_{k=1}^{K}\sum\limits_{i=1}^{{{N}_{S,k}}}{{{p}_{k,i}}\le P_D},\text{    } \label{dlcon1}\\ 
		&\hspace{-0.9em} {{p}_{k,i}}\ge 0,\text{ }\forall k \in \mathcal{K}, \forall i\in \mathcal{N}_{S,k}, \label{dlcon2}\\ 
		&\hspace{-0.9em} \mathbf{S}_k(i,d)=\{0,1\},\forall k \in \mathcal{K},\forall i\in \mathcal{N}_{S,k}, \forall d\in \mathcal{D}_k, \label{dlcon3}\\ 
		&\hspace{-0.9em} \sum\limits_{d=1}^{D_k}\mathbf{S}_k(i,d) \le 1, \text{ }\forall k \in \mathcal{K}, \text{ }\forall i\in \{1,...,{{N}_{S,k}}\}, \label{dlcon4}
	\end{align}
\end{subequations}
where
\begin{align}
	-\overline{R}_k &= \sum\limits_{d=1}^{D_k} \left( {a_{n,\varepsilon}}\sqrt{ \sum\limits_{i=1}^{N_{S,k}} \mathbf{S}_k(i,d)V_{k,i}(p_{k,i}) } \right. \nonumber \\
	&\quad \left. - \sum\limits_{i=1}^{N_{S,k}} \mathbf{S}_k(i,d)\log \left( 1 + p_{k,i} \gamma_{k,i} \right) \right)
\end{align}
and
\begin{equation}
	{{{V}}_{k,i}}=1-\frac{1}{{{\left( 1+{{{p}_{k,i}}{{\gamma}_{k,i}}} \right)^2}}}. 
\end{equation}

The problem (\ref{PDL}) can be viewed as a multi-user generalization of problem (\ref{P1})—essentially a $K$-fold higher-dimensional extension, where the total transmit power is globally constrained by (\ref{dlcon1}). Therefore, the proposed PACA algorithm in Section \uppercase\expandafter{\romannumeral4}, developed for solving (\ref{P1}) can applied to solve (\ref{PDL}). Specifically, we can first introduce the auxiliary variables $\left\{\mathbf{Q}_k\right\}_{k=1}^{K}$, where the element in the $i$-th row and $d$-th column $\mathbf{Q}_k(i,d)=\mathbf{S}_k(i,d)p_{k,i}$. Then by further  introducing the auxiliary variables  $\left\{\mathbf{G}_k\right\}_{k=1}^{K}$ and corresponding penalty factors $\left\{\rho_k\right\}_{k=1}^{K}$, the problem (\ref{PDL}) can be reformulated as a continuous non-convex problem similar in structure to problem (P2) in (\ref{P3}), with variables to be $\left\{\mathbf{Q}_k\right\}_{k=1}^{K}$ and $\left\{\mathbf{G}_k\right\}_{k=1}^{K}$. Finally, this reformulated problem can be effectively solved by applying the proposed PACA algorithm, as detailed in \textbf{Algorithm~\ref{al2}}, after replacing the inputs and variables associated with P2P STCC-MSC by those corresponding to the multi-user setting. As a result, the overall computational complexity for solving (\ref{PDL}) scales linearly with the number of users $K$.

\subsection{Multi-User Uplink STCC-MSC}
Due to the uplink-downlink duality, similarly, the BD method is used to manage the inter-user interference in the uplink communication system. With a slight abuse of notations, we use $\overline{\mathbf{x}}_k \in \mathbb{C}^{N_k \times 1}$ to denote the transmitted symbol vector at user $k$ at a channel use. The received signal at the BS is
\begin{equation}
	{\mathbf{y}}=\sum\limits_{k=1}^{K}{\mathbf{H}_k\overline{\mathbf{x}}_k}+{\mathbf{w}},
\end{equation}
where ${\mathbf{w}} \sim \mathcal{CN}(0, \sigma^2\mathbf{I}_{N_B \times N_B})$ is the AWGN at the BS. By left-multiplying $\mathbf{B}_k \in \mathbb{C}^{N_k \times N_B}$, we can recover the signal of the $k$-th user as
\begin{equation} \label{34}
	\mathbf{B}_k{\mathbf{y}}=\mathbf{B}_k\mathbf{H}_k\overline{\mathbf{x}}_k+\mathbf{B}_k\sum\limits_{j=1,j\ne k}^{K}{\mathbf{H}_j\overline{\mathbf{x}}_j}+\mathbf{B}_k{\mathbf{w}}.
\end{equation}
Following the steps in (\ref{BD1}) and (\ref{BD2}), we apply the SVD to \( \overline{\mathbf{H}}_k^H \), where \( \overline{\mathbf{H}}_k \) is the concatenated channel matrix of all users except user \( k \). This yields an orthonormal basis for the null space of \( \overline{\mathbf{H}}_k^H \). The unitary matrix \( \mathbf{B}_k \) is then constructed by selecting the first \( N_k \) columns from this basis and applying the Hermitian transpose. As a result, we have \( \mathbf{B}_k \overline{\mathbf{H}}_k = \overline{\mathbf{H}}_k^H \mathbf{B}_k^H = \mathbf{0}_{N_k \times N_k} \), and (\ref{34}) is rewritten as
\begin{equation} \label{35}
	\overline{\mathbf{y}}_k=\mathbf{B}_k\mathbf{H}_k\overline{\mathbf{x}}_k+\overline{\mathbf{w}},
\end{equation}
where $\overline{\mathbf{y}}_k=\mathbf{B}_k{\mathbf{y}}$ and $\overline{\mathbf{w}}=\mathbf{B}_k{\mathbf{w}}$. Since $\mathbf{B}_k$ is unitary, we have $||\overline{\mathbf{y}}_k||^2_2=||{\mathbf{y}}_k||^2_2$ and ${\overline{\mathbf{w}}} \sim \mathcal{CN}(0, \sigma^2\mathbf{I}_{N_k \times N_k})$.

The multi-user MIMO uplink system is now decoupled into \( K \) parallel P2P sub-systems with $\mathbf{B}_k\mathbf{H}_k$ to be the equivalent channel matrix in (\ref{n1}). Similar as in the downlink communication, each sub-system follows the architecture illustrated in Fig. \ref{system}, where the singular matrices are obtained from the SVD of $\mathbf{B}_k\mathbf{H}_k$ for the $k$-th sub-system. In contrast to the downlink case where joint power allocation is required at the BS after applying BD, the uplink setting allows each user to independently maximize its own transmission rate due to the absence of inter-user interference.  Therefore, solving the sum-rate maximization problem in multi-user uplink STCC-MSC is equivalent to solving \( K \) parallel rate maximization problems. By reusing the notations in Section \uppercase\expandafter{\romannumeral5}-A, the \( k \)-th problem given by
\begin{subequations} \label{PUL}
	\begin{align}
		\underset{\mathbf{p}_k,\mathbf{S}_k}{\mathop{\min }}\,&-\overline{R}_k\label{PULobj} \\ 
		\text{  s}\text{.t}\text{. }&\sum\limits_{i=1}^{{{N}_{S,k}}}{{{p}_{k,i}}\le P_k},\text{    }\\ 
		& {{p}_{k,i}}\ge 0,\text{ }\forall i\in \mathcal{N}_{S,k},\\ 
		& \mathbf{S}_k(i,d)=\{0,1\},\text{ }\forall i\in \mathcal{N}_{S,k},\text{ } \forall d\in \mathcal{D}_k, \\ 
		& \sum\limits_{d=1}^{D_k}\mathbf{S}_k(i,d) \le 1, \text{ }\forall i\in \mathcal{N}_{S,k}, 
	\end{align}
\end{subequations}
where $P_k$ is the maximum transmit power of user $k$ and $p_{k,i}$ is the power allocated to its $i$-th subchannel by user $k$. Problem (\ref{PUL}) can be regarded as a P2P STCC-MSC rate maximization problem and can be effectively solved by directly applying the proposed PACA algorithm presented in Section \uppercase\expandafter{\romannumeral4}.

\section{Numerical Results}
\newcommand{\tabincell}[2]{\begin{tabular}{@{}#1@{}}#2\end{tabular}}
\begin{table}[t]
	\centering
	\renewcommand\arraystretch{1.3}
	\caption{Simulation Parameters}
	\label{tab1}
	\setlength{\tabcolsep}{3pt}
	\begin{tabular}{|p{120pt}<{\centering}|p{100pt}<{\centering}|}
		\hline
		\textbf{Parameters} & \textbf{Value Range}\\
		\hline
		Carrier frequency ${f_c}$	&  3.5GHz\\
		\hline
		Transmission Bandwidth ${B}$	&  60 MHz \cite{3GPP2}\\
		\hline
		Noise power spectral density & -174 dBm/Hz \\
		\hline
		Number of channel uses \textit{n} &30\\
		\hline
		Block error probability $\varepsilon$ & $10^{-6}$\\
		\hline
		Initial penalty factor $\rho$ & $1$\\
		\hline
		Penalty update factor $\eta$ & $2$\\
		\hline
	\end{tabular}
\end{table}
This section presents numerical results for both P2P (Figs. \ref{convergence}–\ref{imp}) and multi-user (Fig. \ref{mu}) scenarios \footnote{The simulation code will be provided to reproduce the results of this paper once accepted: https://github.com/L-Jin-bupt}. Following prior works \cite{CL,FY,TVT}, wireless channels are modeled as Rayleigh fading, where each element of the channel matrix is independently drawn from a circularly symmetric complex Gaussian distribution $\mathcal{CN}(0, \beta)$, with $\beta$ denoting the large-scale path loss. The path loss $\beta$ is set as $32.4+20{\rm log}10(f_c)+30{\rm log}10(\overline{d})$  \cite{3GPP1}, where $f_c$ is the carrier frequency and $\overline{d}$ is the distance between the transmitter and the receiver. {All simulation results are averaged over 5000 independent channel realizations.} {Since the exact finite-$n$ remainder term is unavailable in closed-form, and the commonly used approximation $\frac{{\rm log}n}{2n}$ is relatively small and
	would only introduce a constant offset for a fixed $n$, all numerical results are obtained using the standard normal approximation introduced in  Section \uppercase\expandafter{\romannumeral2} and Section \uppercase\expandafter{\romannumeral3}, to maintain consistency with the optimization framework. This omission does not affect the relative performance comparisons or the conclusions drawn from the numerical results.}

In the P2P communication, the antenna numbers of transmitter and receiver are set as 12 and 8, respectively, and the distance between them is $\overline{d}=200$ meters. The maximum transmit power is set to 24 dBm, satisfying the allowable transmit power for local area BSs \cite{3GPP2}. In the multi-user scenario, 8 users are considered. The BS is equipped with 32 antennas, and each user has 4 antennas. The allowable number of streams for each user is set to 2. The users are uniformly distributed within a circular area of radius 50 meters, with the circle's center located 200 meters from the BS. For downlink communication, the BS's maximum transmit power is set to 28 dBm, satisfying the allowable transmit power for medium range BSs \cite{3GPP2}. For uplink communication, the total maximum transmit power across all users is 28 dBm, equally allocated among them. Other simulation parameters are shown in Table \ref{tab1}, if not specifically mentioned.

{To enable a fair comparison with the proposed STCC architecture under the same number of transmitted streams $D$, we introduce a benchmark termed limited-stream temporal channel coding (LS-TCC), where the conventional TCC is constrained to employ at most $D$ streams. Under this constraint, as we described in Appendix \ref{appa2}, only the $D$ strongest eigenchannels are activated, and the transmit power is allocated among them. In addition, the conventional TCC without this stream constraint is also included as a benchmark, where all available spatial subchannels are utilized i.e., $D=N_S$.}

The main labels in the figures are explained as: 
\begin{enumerate}
	\item “STCC-PACA" represents the STCC-MSC scheme using the PACA algorithm proposed in this paper. 
	\item “STCC-BMCA" represents the STCC-MSC scheme using the BMCA algorithm proposed in \cite{PE}. 
	\item { “TCC-WF" represents the TCC scheme employing water-filling (WF) method for power allocation, which achieves capacity in the asymptotic regime but is not valid in the finite blocklength regime.}
	\item {“TCC-SCA" represents the TCC scheme where the power allocation uses the successive convex approximation (SCA) method, which is described in Appendix \ref{appa2}.}
	\item {“LS-TCC-WF" represents the LS-TCC benchmark employing the WF power allocation.}
	\item {“LS-TCC-SCA" represents the LS-TCC benchmark employing the SCA power allocation.}
\end{enumerate}

\begin{figure}[t!]
	\centerline{\includegraphics[width=1\linewidth]{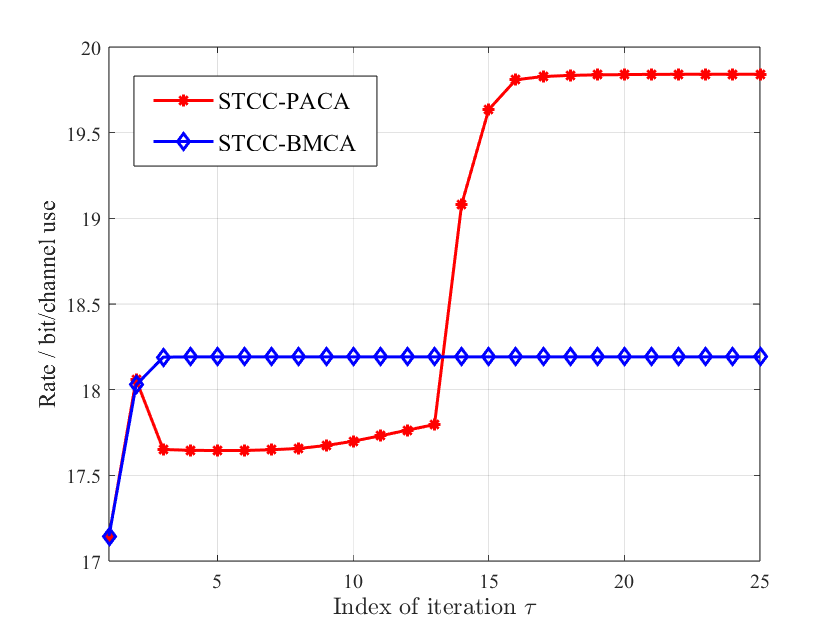}}
	\caption{Rate versus index of iteration $\tau$.}
	\label{convergence}
\end{figure}

Fig. \ref{convergence} demonstrates the convergence of the proposed PACA algorithm and the benchmark BMCA algorithm in solving the STCC rate maximization problem (\ref{P1}) with the allowable number of streams to be 5. As shown in the figure, the PACA algorithm exhibits slower convergence, but eventually attains a higher achievable rate compared to the BMCA algorithm. In the PACA algorithm, the objective function exhibits fluctuations in early iterations due to a small penalty factor $\rho$, which leads to intermediate solutions violating the $\ell_0$-norm constraint (\ref{p2con2}) in problem (P2). As $\rho$ increases exponentially, the objective becomes monotonic and grows rapidly after crossing a threshold, eventually converging.

Fig. \ref{versusD} plots the rate versus the allowable number of streams, $D$. The results indicate that the PACA algorithm achieves a 9.85\% rate improvement over the BMCA algorithm within the STCC scheme. Furthermore, the STCC-PACA combination improves the rate by 28.68\% over the LS-TCC scheme. 	{ Since TCC is a special case of LS-TCC when $D = N_S$, its achievable rate remains invariant with respect to $D$. Unlike the monotonic decreasing trend observed in \cite{TVT} under CSI-R and linear receivers, the achievable rate obtained by the proposed PACA algorithm does not strictly decrease with $D$. This difference arises because, under the full-CSI setup, the transmit power and stream assignment are jointly optimized, resulting in different resource allocation solutions for different $D$. Moreover, since problem (\ref{P1}) is non-convex, PACA converges to a locally optimal solution, whose objective value may exhibit non-monotonic variations with respect to $D$. In contrast, the BMCA algorithm exhibits a monotonically decreasing rate as $D$ increases, indicating a different optimization behavior.} 

\begin{figure}[t!]
	\centerline{\includegraphics[width=1\linewidth]{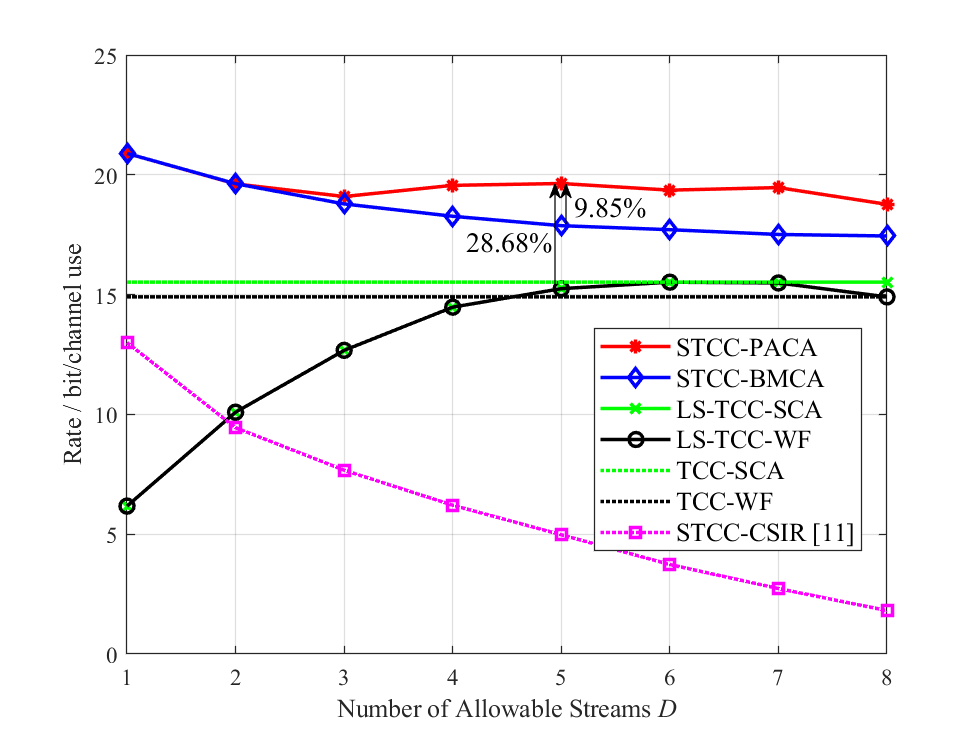}}
	\caption{{Rate versus allowable number of streams $D$.}}
	\label{versusD}
\end{figure}

\begin{figure}[t!]
	\centerline{\includegraphics[width=1\linewidth]{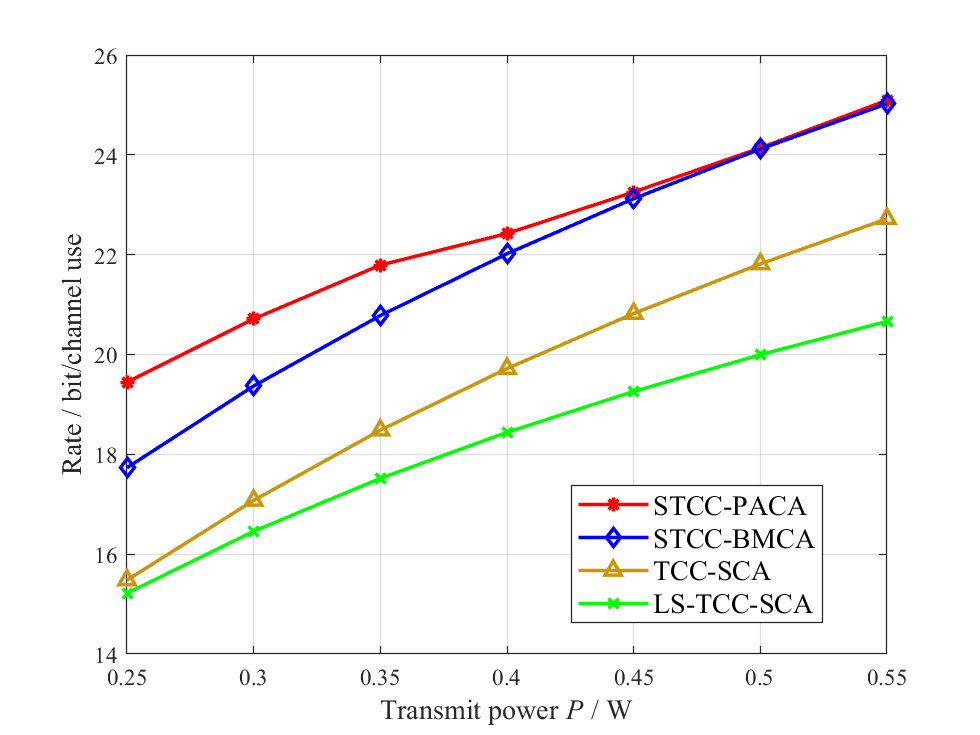}}
	\caption{{Rate versus transmit power $P$.}}
	\label{versusP}
\end{figure}

\begin{figure}[t!]
	\centerline{\includegraphics[width=1\linewidth]{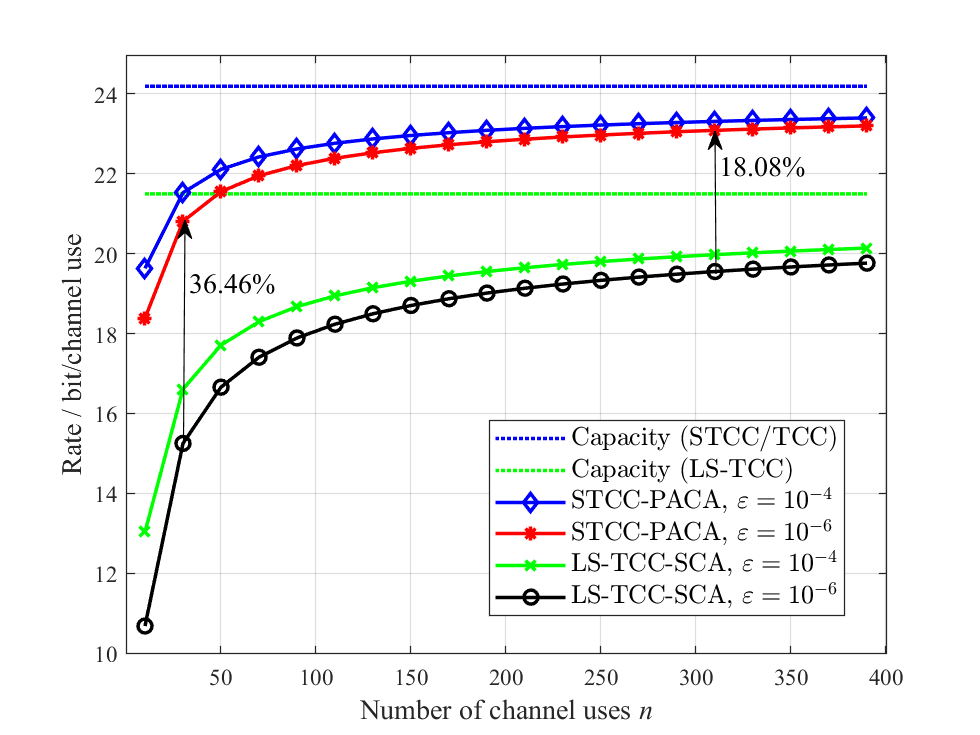}}
	\caption{{Rate versus number of channel uses $n$.}}
	\label{changen}
\end{figure}

Fig.~\ref{versusP} plots the achievable rate versus the transmit power $P$, with the allowable number of streams $D = 5$. {As WF and SCA methods yield almost equal rates for both TCC and LS-TCC when $D=5$, only TCC-SCA and LS-TCC-SCA are plotted in Fig. \ref{versusP}.} As expected, all schemes achieve higher rates as the transmit power increases. An interesting observation is that the performance advantage of the PACA algorithm over the BMCA algorithm gradually diminishes with increasing transmit power. This is mainly because the performance gain of PACA stems from its ability to reduce the rate loss term involving $V_i$ in the objective function~(\ref{p2obj}). As the SNR increases, the channel dispersion term $V_i$, defined in~(\ref{Vi}), approaches 1, rendering the rate loss term a constant. As a result, the rates achieved by the two algorithms gradually converge. {Moreover, increasing the transmit power amplifies the rate advantage of TCC over LS-TCC, which fundamentally originates from TCC's full subchannel multiplexing gain.}

Fig. \ref{changen} plots the rate versus the number of channel uses $n$ under different error probability requirements, with $D = 5$. { As shown in Fig. \ref{versusD}, the achievable rates of LS-TCC and TCC are almost identical under this setting. Hence, only the LS-TCC curves are shown for visual clarity. As indicated by equations (\ref{RT}) and (\ref{Rd}), in both STCC and TCC schemes, the rate increases with larger $n$ and decreases as the error probability requirement becomes more stringent.  Further, we investigated the asymptotic capacity when $n \to \infty$ as described in App. \ref{appa3}, where we prove that STCC has the same asymptotic capacity as TCC. It can be observed from the figure that as $n$ increases, the achievable rates of all schemes gradually approach their corresponding asymptotic capacities. Moreover, consistent with the findings in \cite{FY,TWC,TVT}, it can be observed that the smaller the $n$, the greater the rate advantage of STCC over LS-TCC, highlighting its superiority in short-blocklength communications.}

\begin{figure}[t!]
	\centerline{\includegraphics[width=1\linewidth]{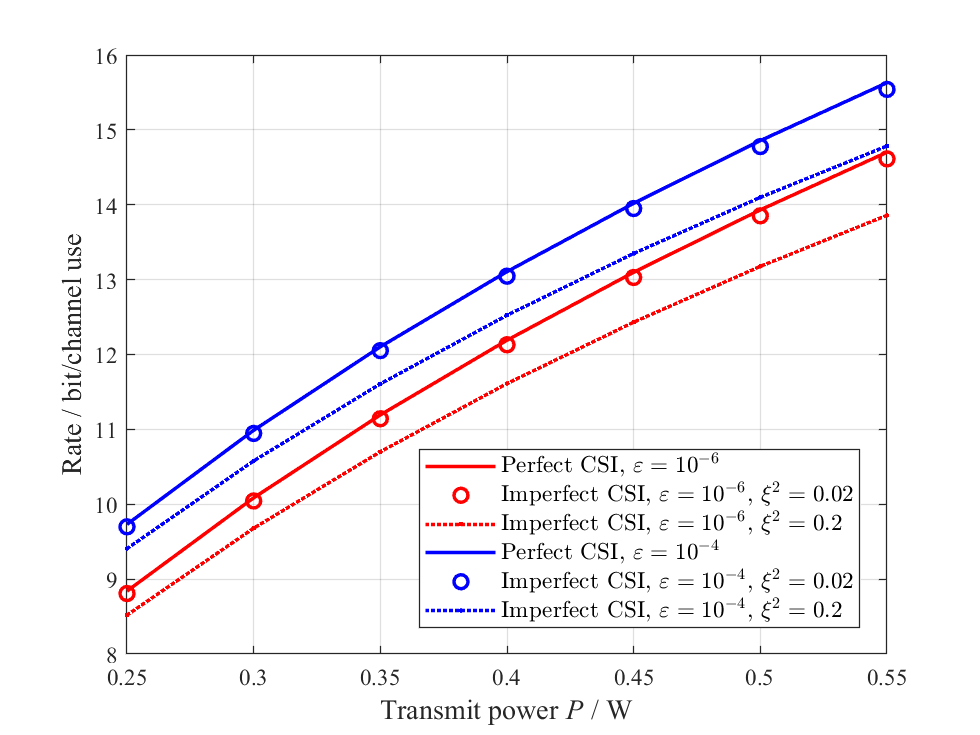}}
	\caption{{Rate versus transmit power under imperfect CSI.}}
	\label{imp}
\end{figure}

{Fig. \ref{imp} illustrates the impact of imperfect CSI with $D=5$.  Under imperfect CSI, the impact introduced by the channel estimation error is equivalently treated as an additional noise, following the methodology in \cite{CL}. 
Specifically, the thermal noise power is normalized to unity, while the estimation-error variance $\xi^2$ is set to 0.02 and 0.2. Under this model, the effective received SNR of the $i$-th subchannel is given by $\frac{p_i/\sigma^2}{p_i\xi^2+1}$.
As expected, channel estimation errors reduce the effective received SNR, resulting in a lower achievable rate. Moreover, this performance degradation becomes more pronounced at higher transmit SNRs, which is consistent with the observations reported in \cite{CL}.}

\begin{figure}[t!]
	\centerline{\includegraphics[width=1\linewidth]{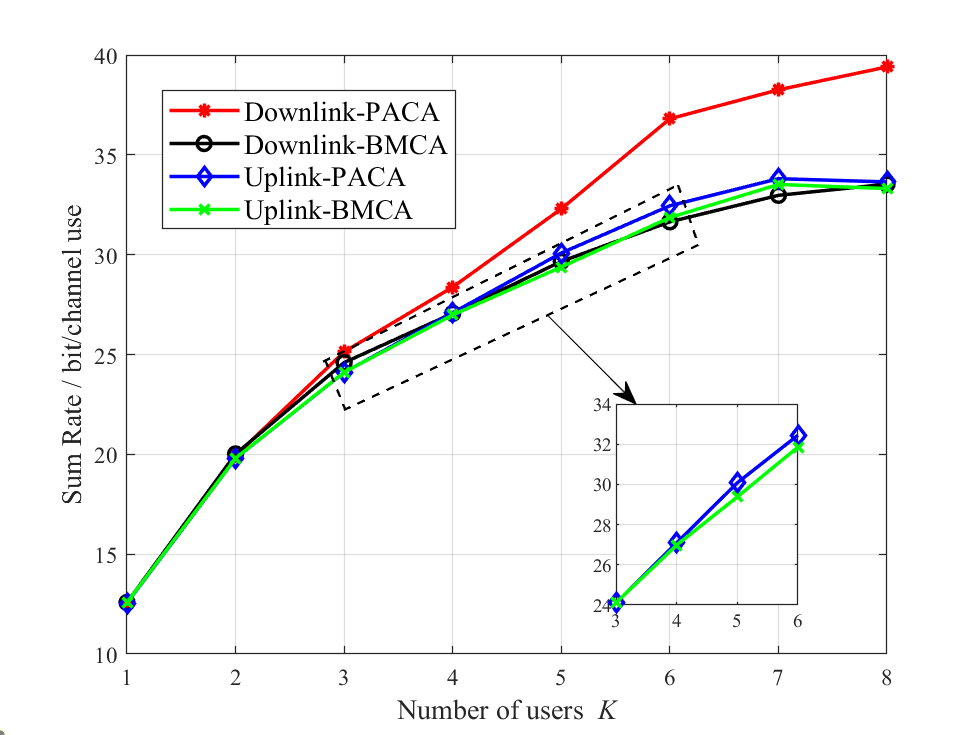}}
	\caption{Sum-rate versus number of users $K$.}
	\label{mu}
\end{figure}

Fig. \ref{mu} plots the sum rate of all users in both downlink and uplink communications. The downlink achieves a higher sum rate than the uplink when serving the same number of users, and this advantage becomes more significant with increasing user numbers. This is attributed to the centralized subchannel and power allocation at the BS in the downlink communication, whereas uplink users optimize their own transmissions independently, lacking coordination. As the number of users increases, the sum rate generally improves due to better utilization of spatial DoF, which leads to higher spatial multiplexing gains. However, when the total number of user antennas approaches that of the BS, the null space available to each user under the BD method becomes significantly restricted. This limitation reduces the effective transmission capability, leading to a direct degradation in the uplink sum rate. In contrast, the downlink benefits from centralized resource allocation at the BS, which, despite yielding diminishing rate gains, ensures that the sum rate remains increasing. Moreover, as the number of users increases, the transmit power allocated to each user decreases, thereby enlarging the rate advantage of the PACA algorithm over the BMCA algorithm, which is consistent with the findings shown in Fig. \ref{versusP}.

\section{Conclusions and Discussions}
	This paper investigate the achievable rate of STCC in MIMO systems when CSI is available in the finite blocklength regime. We first build the STCC-MSC transmission architecture in the P2P MIMO system. Then we maximize the rate of STCC-MSC by jointly optimizing the subchannel assignment and power allocation strategies, which is formulated as a MINLP problem. A PACA algorithm is proposed to efficiently solve the problem and obtain a locally optimal solution. We further extend the designs to the more general multi-user MIMO systems. By applying BD method to eliminate inter-user interference, the reformulated sum-rate maximization problem can be effectively solved by the proposed PACA algorithm. Simulation results indicate that the PACA algorithm achieves a 9.85\% rate improvement over the benchmark algorithm within the STCC-MSC scheme. Furthermore, the joint STCC-MSC-PACA scheme improves the rate by 28.68\% over traditional TCC.

{This work provides an application of STCC in MIMO communication systems. Compared with the traditional TCC architecture, the proposed STCC-MSC architecture incorporates a stream–subchannel matching mechanism with modest modifications to the overall structure.
Moreover, the PACA algorithm provides a suitable basis for exploring further extensions in future work. For example, these extensions may involve different optimization objectives, such as maximizing the weighted sum rate or the minimum user rate to enhance fairness; additional constraints, such as minimum rate guarantees for specific users or streams; and the application of STCC in emerging communication paradigms, including cell-free systems and reconfigurable intelligent surface-assisted communications.}

\appendices
\section{{Analysis of the Spurious Optimization of $\mathbf{S}$ with Fixed $\mathbf{q}$ in Problem (\ref{P1})}}
\label{appa1}
When using the alternating optimization to optimize $\mathbf{S}$ and $\mathbf{p}$ iteratively to solve problem (\ref{P1}), the sub-problem of solving $\mathbf{S}$ with fixed $\mathbf{q}$ is expressed as
\begin{subequations} \label{p1}
	\begin{align}
		\underset{\mathbf{S}}{\mathop{\min }}\,&f_V(\mathbf{S})-f_C(\mathbf{S}) \label{objo} \\ 
		\text{  s}\text{.t}\text{. }& \mathbf{S}(i,d)=\{0,1\},\text{ }\forall i\in \mathcal{N}_S,\text{ } \forall d\in \mathcal{D}, \label{co3}\\ 
		& \sum\limits_{d=1}^{D}\mathbf{S}(i,d) \le 1, \text{ }\forall i\in \mathcal{N}_S, \label{co4}.
	\end{align}
\end{subequations}
where 
\begin{equation}
	f_V(\mathbf{S})={a_{n,\varepsilon}}\sum\limits_{d=1}^{D}{\sqrt{\sum\limits_{i=1}^{{{N}_{S}}}{\mathbf{S}(i,d)V_i(p_i)}}} 
\end{equation}
and
\begin{equation}
	f_C(\mathbf{S}) = \sum\limits_{d=1}^{D}{\sum\limits_{i=1}^{{{N}_{S}}}{\mathbf{S}(i,d)\log \left( 1+{{p}_{i}}{{\gamma }_{i}} \right)} }.
\end{equation}

The constraints (\ref{co3}) and (\ref{co4}) imply that the optimal solution $\mathbf{S}^*$ to problem (\ref{p1}) should approximately satisfy
\begin{equation}
	f_C\left(\mathbf{S}^*\right) \approx \sum\limits_{i=1}^{{N}_{S}} \log \left( 1 + p_i \gamma_i \right), \label{ap}
\end{equation}
which is a constant and suggests that all subchannels are used for transmission. In general, the more subchannels are used, the smaller the objective function (\ref{objo}) becomes. However, there may exist a small number of subchannels with extremely low signal-to-noise ratios (SNRs), i.e., \( p_i \gamma_i \), for which including them in the transmission process may actually increase the value of (\ref{objo}) because the rates of these subchannels may be less than 0 \cite[Fig. 3]{mono}. Nevertheless, since their SNRs are very low, their contribution to \( f_C \) is negligible, and thus the approximation in (\ref{ap}) remains valid.

Therefore, the objective function of problem (\ref{p1}) becomes minimizing $f_V(\mathbf{S})$. Due to the Jensen's inequality, the optimal solution $\mathbf{S}^*$  should satisfy that 
\begin{equation}
	\left| \left\{ j \in \{1, 2, \dots, D\} \; \middle| \; \mathbf{S}^*(:,j) \neq \mathbf{0} \right\} \right| = 1, \label{sas}
\end{equation}
which means that $\mathbf{S}^*$ has exactly one nonzero column, and all other columns are zero vectors. The physical meaning of (\ref{sas}) is that the optimal solution to problem (\ref{p1}) requires all subchannels to serve one stream. Therefore, if we directly perform alternating optimization between $\mathbf{p}$ and $\mathbf{S}$, the sub-problem of optimizing the subchannel assignment matrix $\mathbf{S}$ yields a solution that violates the intended meaning of the original problem. Hence, we refer to this sub-problem as spurious.

\section{SCA Method for Maximizing the Rate of TCC}
\label{appa2}
In the TCC scheme, to maximize the MIMO system rate, when the allowable number of streams $D$ is less than the spatial DoF $N_S$, the $D$ subchannels with the highest SNRs are used for transmission, sharing the total transmit power $P$. Let $\overline{\mathbf{p}}=[\overline{p}_1,\overline{p}_2,...,\overline{p}_D]^T$ denote the allocated power to subchannels/streams. Recalling (\ref{RT}), the problem of optimizing the power allocation to maximize the rate of TCC is formulated as
\begin{subequations} \label{PT}
	\begin{align}
		\underset{\overline{\mathbf{p}}}{\mathop{\min}} \quad 
		& \sum_{i=1}^{D} \left( {a_{n,\varepsilon}}\sqrt{V_i(\overline{p}_i)} - \log(1 + \overline{p}_i \gamma_i) \right) \\
		\text{s.t.} \quad 
		& \sum_{i=1}^{D} \overline{p}_i \le P, \\
		& \overline{p}_i \ge 0,\quad \forall i \in \{1, 2, \dots, D\},
	\end{align}
\end{subequations}
where $P$ is the maximum transmit power, ${a_{n,\varepsilon}}=\frac{{{Q}^{-1}}\left( \varepsilon  \right)}{\sqrt{n}}\log e$, and ${V}_{i}$ is given by (\ref{Vi}). 

When the number of channel uses $n \to \infty$, water-filling (WF)  method obtains the optimal solution to problem (\ref{PT}). However, the WF method no longer holds optimality when $n$ is finite. Similar to the approach used to solve problem (P2.1) in the inner loop of the PACA algorithm, we can use the SCA method to obtain a locally optimal solution to the problem (\ref{PT}), by solving following problem in the $\zeta-$th iteration
\begin{subequations} \label{PTsca}
\begin{align}
	 \underset{\overline{\mathbf{p}}}{\mathop{\min }}\,&\sum\limits_{i=1}^{D}{\left(\overline{V}'_i\left(\overline{p}_i^{(\zeta)}\right){{\overline{p}}_{i}}- {{\log }_{2}}(1+{{\overline{p}}_{i}}{{\gamma }_{i}}) \right)} \\ 
	 {\rm s.t.}&\sum\limits_{i=1}^{D}{{{\overline{p}}_{i}}}\le P, \\
	& \overline{p}_i \ge 0,\quad \forall i \in \{1, 2, \dots, D\}, 
\end{align}
\end{subequations}
where 
\begin{equation}
	\overline{V}'_i\left(\overline{p}_i^{(\zeta)}\right)=\frac{{a_{n,\varepsilon}}{{\gamma }_{i}}}{{{\left(1+\overline{p}_{i}^{(\zeta )}{{\gamma }_{i}}\right)}^{3}}\sqrt{1-\frac{1}{{{\left(1+\overline{p}_{i}^{(\zeta )}{{\gamma }_{i}}\right)}^{2}}}}}
\end{equation}
and $\overline{p}_i^{(\zeta)}$ is the solution in the $(\zeta-1)-$th iteration. Problem (\ref{PTsca}) can be solved optimally by existing
convex optimization solvers such as CVX \cite{cvx}, and the optimal
solution yields the next iteration.

\section{{Asymptotic Capacity Analysis of STCC-MSC}}
\label{appa3}
{When the number of channel uses $n \to \infty$, the finite-blocklength rate penalty vanishes, and the achievable rate approaches the asymptotic capacity. It is shown in \cite{FY} that single-stream STCC and TCC have the same asymptotic capacity. In the following, we prove that the asymptotic capacity of STCC-MSC is independent of the allowable number of streams $D$, and therefore coincides with that of conventional TCC.
}

{When $n \to \infty$, the rate maximization problem in (\ref{P1}) becomes
\begin{subequations} \label{P1C}
	\begin{align}
		\underset{\mathbf{p},\mathbf{S}}{\mathop{\max }}\,&\sum\limits_{d=1}^{D}{ \sum\limits_{i=1}^{{{N}_{S}}}{\mathbf{S}(i,d)\log \left( 1+{{p}_{i}}{{\gamma }_{i}} \right)} } \\ 
		\text{  s}\text{.t}\text{. }&\sum\limits_{i=1}^{{{N}_{S}}}{{{p}_{i}}\le P},\text{    } \\ 
		& {{p}_{i}}\ge 0,\text{ }\forall i\in \mathcal{N}_S, \\ 
		& \mathbf{S}(i,d)=\{0,1\},\text{ }\forall i\in \mathcal{N}_S,\text{ } \forall d\in \mathcal{D}, \\ 
		& \sum\limits_{d=1}^{D}\mathbf{S}(i,d) \le 1, \text{ }\forall i\in \mathcal{N}_S \label{P1C4}. 
	\end{align}
\end{subequations}
}
{
\begin{lemma}\label{l1}
There always exists an optimal solution $(\mathbf{p}^{\star},\mathbf{S}^{\star})$ to problem~(\ref{P1C}) such that
\begin{equation}
	\sum_{d=1}^{D} \textbf{S}^{\star}(i,d)=1,\qquad \forall i\in\mathcal{N}_S,  \label{45}
\end{equation}
i.e., every subchannel is assigned to one stream.
\end{lemma}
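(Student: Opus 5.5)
I will prove the lemma by an exchange argument: starting from any optimal solution, I will reassign every unassigned subchannel to some stream and show that the objective does not decrease and feasibility is preserved. First I will argue that problem~(\ref{P1C}) has an optimal solution at all. For each of the finitely many binary matrices $\mathbf{S}$ satisfying (\ref{P1C4}), the objective is a continuous function of $\mathbf{p}$ on the compact simplex $\{\mathbf{p}\ge \mathbf{0},\ \sum_i p_i\le P\}$, so its maximum over $\mathbf{p}$ is attained. Taking the best of these finitely many maxima gives a global maximizer $(\mathbf{p}^{\circ},\mathbf{S}^{\circ})$.

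Next, I will define the index set $\mathcal{I}_0=\{i\in\mathcal{N}_S:\sum_{d}\mathbf{S}^{\circ}(i,d)=0\}$ of unassigned subchannels and construct $\mathbf{S}^{\star}$ from $\mathbf{S}^{\circ}$ by setting $\mathbf{S}^{\star}(i,1)=1$ for every $i\in\mathcal{I}_0$, leaving all other rows unchanged. Stream $1$ is an arbitrary choice; any stream would do. This keeps every row sum at most one, so (\ref{P1C4}) and the binary constraints hold. It also forces $\sum_d\mathbf{S}^{\star}(i,d)=1$ for all $i$, which is (\ref{45}). The power vector is left untouched, $\mathbf{p}^{\star}=\mathbf{p}^{\circ}$, so the power constraints remain satisfied.

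Then I will compare objectives. Since the objective equals $\sum_i\big(\sum_d\mathbf{S}(i,d)\big)\log(1+p_i\gamma_i)$, the change is $\sum_{i\in\mathcal{I}_0}\log(1+p^{\circ}_i\gamma_i)\ge 0$, because $p_i^{\circ}\ge0$ and $\gamma_i>0$. Hence $(\mathbf{p}^{\star},\mathbf{S}^{\star})$ attains at least the optimal value, so it is also optimal and satisfies (\ref{45}).

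The main subtlety is only the existence of an optimum, which the finite-enumeration and compactness argument above handles. The key structural point is that in the $n\to\infty$ objective the dispersion term has vanished, so the Jensen-type penalty discussed in Appendix~\ref{appa1} is absent and adding subchannels to a stream never hurts. One may additionally remark that, by strict monotonicity, a strictly optimal solution must give zero power to any unassigned subchannel, but this is not needed for the lemma.
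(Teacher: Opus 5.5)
Your proof is correct and follows essentially the paper's exchange argument: assign each unassigned subchannel to an arbitrary stream and check that the objective does not decrease. The differences are minor. You keep $\mathbf{p}^{\circ}$ and note the objective rises by the nonnegative amount $\sum_{i\in\mathcal{I}_0}\log(1+p_i^{\circ}\gamma_i)$, whereas the paper first sets the power on unassigned subchannels to zero so the value is unchanged. You also prove that an optimum exists (finite enumeration over $\mathbf{S}$ plus compactness in $\mathbf{p}$), which the paper takes for granted.
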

}

\begin{IEEEproof}
{	For the optimal solution $(\mathbf{p}^{\star},\mathbf{S}^{\star})$ to problem~(\ref{P1C}), suppose that there exists at least one subchannel $i_0$ satisfying
		\begin{equation}
			\sum_{d=1}^{D}\textbf{S}^{\star}(i_0,d)=0. \label{46}
		\end{equation}
	 Since subchannel $i_0$ is not assigned to any stream, its contribution to the objective function is always zero regardless of the allocated transmit power. Therefore, allocating a positive power $p^{\star}_{i_0}$ to this subchannel can never improve the objective value, and reallocating this power to any assigned subchannel cannot decrease the achievable rate because the logarithm is monotonically increasing with respect to the transmit power. Hence, whenever (\ref{46}) holds, $p^{\star}_{i_0}$ can be set to zero without affecting the objective value.
}

	{	Therefore, another optimal solution $(\mathbf{p}^{\star}, \tilde{\mathbf{S}})$ can always be constructed by allocating subchannel $i_0$ to an arbitrary stream $d_0 \in \{1,\ldots,D\}$, i.e., setting $\tilde{\mathbf{S}}(i_0,d_0)=1$, while keeping all other elements the same as $\mathbf{S}^{\star}$. Under this newly constructed assignment, the objective value remains unchanged because the corresponding power $p_{i_0}^\star=0$. By repeating this construction for all unassigned subchannels, we obtain an optimal solution that satisfies (\ref{45}). The proof is completed.}
\end{IEEEproof}

{
Under the optimality condition established in Lemma~\ref{l1}, the double summation in the objective function of problem~(\ref{P1C}) reduces to a single summation over the subchannels. Consequently, problem~(\ref{P1C}) can be simplified as
\begin{subequations}
	\begin{align}
		\underset{\mathbf{p}}{\max}\;&\sum_{i=1}^{N_S}\log\left(1+p_i\gamma_i\right)\\
		\text{s.t.}\;&\sum_{i=1}^{N_S}p_i\le P,\quad
		p_i\ge0,\ \forall i\in\mathcal N_S,
	\end{align}
\end{subequations}
which is identical to the asymptotic capacity maximization problem for the conventional TCC. Consequently, the optimal value is independent of the allowable number of streams $D$, and STCC-MSC achieves the same asymptotic capacity as the conventional TCC. The optimal power allocation is given by the classical WF solution.
}

\end{document}